\documentclass[12pt,oneside]{amsart}

\usepackage[utf8]{inputenc}
\usepackage{microtype}
\usepackage{amsfonts}
\usepackage{amsmath}
\usepackage{lmodern}
\usepackage{amsthm}
\usepackage{breqn}
\usepackage{amssymb}
\usepackage{enumerate}

\usepackage{color}

\usepackage{url}

\newtheorem{theorem}{Theorem}[section]
\newtheorem{lemma}[theorem]{Lemma}

\newtheorem{corollary}[theorem]{Corollary}

\theoremstyle{definition}

\newtheorem{example}[theorem]{Example}

\newcommand{\R}{\mathbb{R}}
\newcommand{\Z}{\mathbb{Z}}
\newcommand{\T}{\mathbb{T}}
\newcommand{\p}{\partial}
\newcommand{\x}{{\bf x}}

\newcommand{\ds}{\displaystyle}
\newcommand{\Hy}{\mathbb{H}}

\DeclareMathOperator{\inn}{in}
\DeclareMathOperator{\out}{out}
\DeclareMathOperator{\id}{id}
\DeclareMathOperator{\tr}{tr}
\DeclareMathOperator{\divv}{div}

\DeclareMathOperator{\Vol}{Vol}
\DeclareMathOperator{\vol}{vol}

\usepackage{color}

\usepackage{setspace}
\usepackage[margin=3.5cm]{geometry}

\newcommand\bbZ{{\mathbb{Z}}}
\renewcommand\S{\Sigma}


\author{Michael Eichmair}

\author{Gregory J. Galloway}

\author{Abraão Mendes}

\title{Initial data rigidity results}

\address{Faculty of Mathematics, University of Vienna, Vienna, Austria}
\email{michael.eichmair@univie.ac.at} 

\address{Department of Mathematics, University of Miami, Coral Gables, FL, USA}
\email{galloway@math.miami.edu}

\address{Instituto de Matemática, Universidade Federal de Alagoas, Maceió, AL, Brazil}
\email{abraao.mendes@im.ufal.br}

\begin{document}

\onehalfspacing

\begin{abstract}
We prove several rigidity results related to the spacetime positive mass theorem. A key step is to show that certain marginally outer trapped surfaces are weakly outermost. As a special case, our results include a rigidity result for Riemannian manifolds with a lower bound on their scalar curvature.

\end{abstract}

\maketitle

\section{Introduction} \label{sec:introduction}

In this paper we establish several rigidity results for initial data sets that are motivated by the spacetime positive mass theorem. 

An initial data set $(M, g, K)$ consists of a connected Riemannian manifold $(M, g)$ and a symmetric $(0, 2)$-tensor field $K$. In addition, we assume that $M$ is oriented throughout this paper.

Let $(M, g, K)$ be an initial data set.

The \textsl{local energy density} $\mu$ and the \textsl{local current density} $J$ of $(M, g, K)$ are given by 
\begin {align*}
\mu = \frac{1}{2} \, \left( R - |K |^2 + (\text{tr} \, K)^2 \right) \quad \text{ and } \quad 
J = \text{div} \left(K - (\text{tr} \, K) \, g\right).
\end {align*}
Here, $R$ is the scalar curvature of $(M, g)$. The initial data set is said to satisfy the \textsl{dominant energy condition} (DEC for short)  if 
\[
\mu \geq |\, J \, |.
\]

Let $\Sigma \subset M$ be a two-sided hypersurface with unit normal $\nu$ and 
\[
H = \text{div}_\Sigma \, \nu
\]
be the associated mean curvature. The \textsl{future outgoing null expansion scalar} $\theta^+$ and \textsl{past outgoing null expansion scalar} $\theta^-$ of $\Sigma$ are the quantities
\[
\theta^+ = H + \text{tr}_\Sigma (K) \qquad \text{ and } \qquad \theta^- = H - \text{tr}_\Sigma (K).
\]
The hypersurface $\Sigma$ is \textsl{outer trapped} if $\theta^+ < 0$, \textsl{weakly outer trapped} if $\theta^+ \leq 0$, and \textsl{marginally outer trapped} if $\theta^+ = 0$. 
In the latter case, we refer to $\S$ as a \textsl{marginally outer trapped surface} (MOTS for short). Unless stated otherwise, we require that MOTS are closed, i.e.~compact and without boundary.

We also consider the quantities
\[
\chi^+ = A + K|_\Sigma  \qquad \text{ and } \qquad \chi^- =  A - K|_\Sigma
\]
where $A$ is the second fundamental form of $\Sigma$. Our sign convention is such that $H = \text{tr}_\Sigma \, A$ and thus 
$\theta^{\pm} = {\rm tr}_{\Sigma} \, \chi^{\pm}$. 

Initial data sets  arise naturally in general relativity. Let $M$ be a spacelike hypersurface in a spacetime, i.e.~a time-oriented Lorentzian manifold, $(\bar M, \bar g)$.  Let $g$ be the Riemannian metric induced on $M$ and $K$ be the second fundamental form with respect to the future-pointing unit normal $u$ of $M$ in $\bar M$. Then $(M, g, K)$ is an initial data set. In this setting, $\chi^{+}$ and $\chi^{-}$ are the \textsl{null second fundamental forms} with respect to the \textsl{null normal fields} 
\[
\ell^{+} = \nu + u|_\Sigma \qquad \text{ and } \qquad \ell^{-} = \nu - u|_\Sigma 
\]
of $\S$ viewed as a surface in $\bar M$. Note that $\theta^{\pm} = {\rm div}_\Sigma \,  \ell^{\pm}$.

An initial data set $(M, g, K)$ is said to be \textsl{time-symmetric} or \textsl{Riemannian} if $K = 0$. In this case, the DEC is the requirement that the scalar curvature of $(M, g)$ be non-negative. Moreover, $\S$ is a MOTS if and only if it is a minimal surface in $(M, g)$. Quite generally, MOTS share many properties with minimal surfaces, which they generalize; cf.~e.g.~the survey article \cite{AEM}.

The following version of the spacetime positive mass theorem has been obtained by L.-H.~Huang, D.~A.~Lee, R.~Schoen, and the first-named author in \cite{EichmairHuangLeeSchoen}. 

\begin{theorem} [\cite{EichmairHuangLeeSchoen}]
\label{thm.STposmass}
Let $(M, g, K)$ be an $n$-dimensional asymptotically flat initial data set with ADM energy-momentum vector $(E, P)$. Assume that $3 \leq n \leq 7$. If the dominant energy condition $\mu \geq |J|$ is satisfied, then $E \ge |P|$.
\end{theorem}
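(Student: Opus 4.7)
The plan is to follow the Schoen-Yau strategy of reducing the spacetime positive mass theorem to the Riemannian positive mass theorem via Jang's equation, extended to the full range $3 \leq n \leq 7$ and refined to yield the sharper conclusion $E \geq |P|$ rather than only $E \geq 0$.

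First, I would invoke a density theorem to approximate $(M,g,K)$, in a topology that preserves the asymptotic invariants $(E,P)$, by initial data satisfying the strict dominant energy condition $\mu > |J|$ and having a simplified asymptotic structure (harmonic asymptotics). This is a nontrivial preparatory step based on Corvino--Schoen-type conformal deformation arguments adapted to the spacetime setting, and it allows one to work with smooth solutions of various auxiliary elliptic equations without boundary terms at infinity causing trouble.

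Second, for a fixed unit vector $\omega \in \R^n$, the aim is to prove $E \geq P \cdot \omega$; taking the supremum over $\omega$ then yields $E \geq |P|$. To this end, one solves a boosted Jang equation: find $f \colon M \to \R$ such that the graph of $f$ inside $(M \times \R, g + dt^2)$ has mean curvature scalar equal to $\tr_{\widetilde \Sigma} K$, with $K$ extended trivially in the $t$-direction, and with asymptotic behavior tilted by $\omega$ at infinity. The main analytic obstacle is the solvability of this equation: one has to run a capillary regularization, pass to a subsequential limit, and control the blow-up set, which consists of MOTS in $(M,g,K)$ and which produces asymptotically cylindrical ends in the Jang graph $(\widetilde M, \widetilde g)$ that must be capped off without destroying asymptotic flatness. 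Handling this blow-up set carefully is where the bulk of the technical work in \cite{EichmairHuangLeeSchoen} takes place.

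Third, combining the Gauss--Codazzi identities for the Jang graph with the DEC yields a Schoen--Yau-type inequality showing that the scalar curvature of $\widetilde g$ is non-negative up to a divergence term, so that after a conformal deformation one arrives at an asymptotically flat metric with zero scalar curvature whose ADM energy is at most $E - P \cdot \omega$ plus a controlled contribution from the capped cylinders. Applying the Riemannian positive mass theorem, which is available precisely in the range $3 \leq n \leq 7$ through the Schoen--Yau minimal hypersurface descent, to this conformal metric produces $E - P \cdot \omega \geq 0$, and choosing $\omega = P/|P|$ (or $\omega$ arbitrary if $P=0$) completes the argument. The hardest step, and the one I would expect to occupy most of the proof, is the construction and blow-up analysis of the boosted Jang graph in step two.
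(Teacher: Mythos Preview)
The paper does not give its own proof of Theorem~\ref{thm.STposmass}; the result is quoted from \cite{EichmairHuangLeeSchoen} as background, so there is no proof in the paper to compare your proposal against directly. Your outline is a plausible high-level strategy, though it is worth noting that the actual argument in \cite{EichmairHuangLeeSchoen} is organized differently from what you describe: after the density reduction to harmonic asymptotics with strict DEC, one assumes $E < |P|$, uses the special asymptotics to produce large coordinate-planar barriers near infinity with suitable null expansion signs, solves a MOTS Plateau problem between them (the Jang equation enters here as the existence mechanism, via blow-up), and then runs the Schoen--Yau dimension descent on the resulting stable MOTS to reach a contradiction. The ``boosted Jang graph plus conformal change to reduce to the Riemannian positive mass theorem'' packaging you give is not the structure of that paper.

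What the present paper \emph{does} contribute toward Theorem~\ref{thm.STposmass} is the alternative route flagged in the introduction: Lohkamp's reduction \cite[Corollary~2.11]{Lohkamp2016} compactifies the question to ruling out $\mu > |J|$ on an initial data set that is Euclidean with $K = 0$ outside a bounded region; after compactification one lands in the setting of Theorem~\ref{thm.global.foliation}, whose conclusion forces $\mu_K = |J_K|$ everywhere and yields the contradiction. If you want your write-up to reflect this paper's own contribution, that is the argument to present, not the Jang-graph reduction.
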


We refer to \cite{EichmairHuangLeeSchoen} for the definition of the energy-momentum vector. The case of equality $E = |P|$ has recently been characterized by L.-H.~Huang and D.~Lee \cite{Huang-Lee:2020}.

In \cite{Lohkamp2016}, J.~Lohkamp has presented a different proof of Theorem \ref{thm.STposmass} for all $n \geq 3$. His method is by reduction to and proof of the following result: Let $(M, g, K)$ be an initial data set that is isometric to Euclidean space, with $K = 0$, outside some bounded open set $U \subset M$. Then one cannot have $\mu > |J|$ on $U$; see \cite[Theorem~2]{Lohkamp2016}. In particular, if $(M, g, K)$ satisfies the DEC, there must be a point in $U$ where $\mu = |J|$. The goal of our first result is to show that a much stronger conclusion holds when $3 \le n \le 7$. 

Under the assumption of Lohkamp's result stated above, one obtains by obvious inclusion and identification a compact initial data set $(\tilde M, \tilde g, \tilde K)$ with boundary $\partial \tilde M = \Sigma_1 \cup \Sigma_2$ where $\Sigma_1$ and $\Sigma_2$ are flat $(n-1)$-tori in $(\tilde M, \tilde g)$ that both are totally geodesic in the spacetime sense, i.e.~$\chi^{\pm} = 0$ with respect to either choice of unit normal.
In particular, both are MOTS.  With this compactification (also used by Lohkamp) in mind, we state our first main rigidity result.

\begin{theorem} \label{thm.global.foliation}
Let $(M,g,K)$ be an $n$-dimensional,  $3 \leq n \leq 7$, compact-with-boundary initial data set.  Suppose that $(M,g,K)$ satisfies the DEC, $\mu \geq |J|$. Suppose also that the boundary can be expressed as a disjoint union $\p M=\Sigma_0\cup S$ of non-empty unions of components such that the following conditions hold:
\begin{enumerate}
\item $\theta^+\le0$ on $\Sigma_0$ with respect to the normal that points into $M$.
\item $\theta^+\ge0$ on $S$ with respect to the normal that points out of $M$.
\item $M$ satisfies the homotopy condition with respect to $\Sigma_0$.
\item $\Sigma_0$ satisfies the cohomology condition.
\end{enumerate}
Then, the following hold:
\begin{enumerate} [(i)]
\item $M \cong [0,\ell]\times\Sigma_0$ for some $\ell>0$.\\
Let $\Sigma_t \cong \{t\} \times \Sigma_0$ with unit normal $\nu_t$ in direction of the foliation.
\item $\chi^+ = 0$ on $\Sigma_t$ for every $t \in [0, \ell]$. 
\item  $\Sigma_t$ is a flat torus with respect to the induced metric for every $t \in [0, \ell]$. 
\item $\mu + J (\nu_t) = 0$ on $\Sigma_t$ for every $t \in [0, \ell]$. In particular, $\mu = |J|$ on $M$.
\end{enumerate}
\end{theorem}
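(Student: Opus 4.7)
The plan is to (a) produce a weakly outermost MOTS in $M$, (b) apply the announced rigidity result for such surfaces, and (c) propagate this rigidity along a foliation of $M$. For (a), hypothesis (1) makes $\Sigma_0$ a weakly outer trapped barrier and hypothesis (2) makes $S$ an outer untrapped barrier. The existence theory for outermost MOTS between such barriers (Andersson--Metzger, Eichmair) yields a weakly outermost MOTS $\hat\Sigma$ homologous to $\Sigma_0$, with the homotopy condition (3) making the relevant homology class unambiguous. For (b), I would apply the rigidity theorem for weakly outermost MOTS in a DEC initial data set that is advertised in the abstract of the paper. This yields $\chi^+_{\hat\Sigma}=0$, a Ricci-flat induced metric on $\hat\Sigma$, and the saturations $\mu_K=|J_K|$ and $-J_K(\nu)=|J_K|$ along $\hat\Sigma$. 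The cohomology condition (4), reminiscent of the torical hypotheses of Schoen--Yau and Gromov--Lawson, upgrades Ricci-flatness to flatness and identifies $\hat\Sigma$ with an $(n-1)$-torus.

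For (c), I would use that the MOTS stability operator on $\hat\Sigma$ has zero principal eigenvalue and that its formal adjoint admits a positive eigenfunction; the implicit function theorem of Andersson--Metzger then produces a smooth family of MOTS $\Sigma_t$ deforming $\hat\Sigma$ to either side. Each leaf is again weakly outermost in its exterior, so the rigidity propagates: each $\Sigma_t$ satisfies $\chi^+_{\Sigma_t}=0$, is a flat $(n-1)$-torus, and carries the same DEC saturations. Integrating a unit normal field along the foliation yields a diffeomorphism of a tubular neighborhood of $\hat\Sigma$ with a product $(-\varepsilon,\varepsilon)\times\hat\Sigma$. A continuation argument---openness from the implicit function theorem, closedness from uniform flat-torus geometric control together with the barriers $\Sigma_0$ and $S$ and MOTS regularity in dimensions $n\leq 7$---extends the foliation until leaves meet $\Sigma_0$ and $S$, which become the boundary leaves at $t=0$ and $t=\ell$. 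Conclusions (i)--(iv) then follow.

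The chief obstacle is the global extension of the foliation. Closedness requires that the limit leaves, as $t$ approaches the supremum of the extendability interval, be smooth embedded MOTS; the uniform bounds coming from the rigidity, the barrier role of $S$, and the MOTS regularity theory available in dimensions $n\leq 7$ make this tractable, but the details demand care. A further subtlety is ensuring that the continuation actually reaches $\Sigma_0$ and $S$ as leaves rather than, say, accumulating on some intermediate surface; here the homotopy condition (3) is invoked to preclude topologically pathological degenerations of the foliation, and the barrier hypothesis (2) rules out leaves running off to $S$ prematurely.
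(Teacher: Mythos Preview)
Your overall architecture---find a weakly outermost MOTS, apply local rigidity, then continue---matches the paper's, but two of your steps have genuine gaps.

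\textbf{Step (a).} You cannot directly invoke the Andersson--Metzger/Eichmair existence theorem to produce an interior MOTS $\hat\Sigma$: that theorem requires a \emph{strict} outer barrier $\theta^+>0$, whereas hypothesis~(2) only gives $\theta^+\ge 0$ on $S$. More importantly, even when the existence theorem applies, it produces an \emph{outermost} (not merely weakly outermost) MOTS; combined with the cohomology condition (which transfers to any homologous surface via the homotopy condition---this is the actual role of~(3), not ``making the homology class unambiguous''), this would contradict the fact that outermost MOTS admit positive scalar curvature. So the existence theorem is a tool for deriving contradictions, not for producing a surface to foliate from. The paper proceeds differently: it shows that $\Sigma_0$ \emph{itself} is a weakly outermost MOTS, by assuming otherwise, passing to the initial data $(W,g,-K)$ on the region between the offending strictly trapped surface and $S$ (which reverses the barrier roles and makes the strict inequality available), and then running exactly the contradiction just described. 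The local foliation then starts at $\Sigma_0$ and goes one way, not from an interior $\hat\Sigma$ to both sides.

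\textbf{Step (c).} Your closedness argument is where the real work lies, and ``uniform flat-torus geometric control'' plus ``MOTS regularity in dimensions $n\le 7$'' does not do it. The intrinsic flatness of the leaves gives no control on how they sit inside $M$; regularity theory tells you individual leaves are smooth, not that a limit of embedded leaves stays embedded. The paper argues as follows: the vanishing of $\chi^+$ bounds the second fundamental forms uniformly, and the divergence theorem bounds the volumes, so the leaves $\Sigma_t$ converge smoothly to an \emph{immersed} limit $\Sigma_\delta$. Embeddedness is then forced by the Andersson--Metzger neck-gluing trick: if two sheets of a late leaf come close, one inserts a neck to produce a surface with $\theta^+\le 0$ and $\theta^+<0$ somewhere, then flows it to a strictly outer trapped surface homologous to $\Sigma_0$, contradicting weak outermostness. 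This is the missing idea in your continuation argument.
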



The definitions of the \textsl{cohomology condition} and the \textsl{homotopy condition} are given in Section~\ref{sec:weaklyoutermost}.
The cohomology condition ensures that $\Sigma_0$ does not admit a metric of positive scalar curvature. The homotopy condition holds, for example, if $M$ has \textsl{almost product topology} $M \cong ([0,1] \times \Sigma_0) \, \# \, N$ where $N$ is a closed manifold. It implies that
$\Sigma_0$ is connected. A priori, we allow $S$ to have multiple components.  

The assumptions of Theorem~\ref{thm.global.foliation} are satisfied in the compactified picture (after Lohkamp) described above. Note that Theorem~\ref{thm.global.foliation} provides a relatively simple proof of Theorem~2 in \cite{Lohkamp2016} in dimensions $3 \le n \le 7$. In conjunction with Corollary~2.11 in \cite{Lohkamp2016}, this leads to an alternative proof of Theorem~\ref{thm.STposmass} stated above. Conversely, note that under the assumptions of Theorem~\ref{thm.global.foliation}, the proof of Theorem~2 in \cite{Lohkamp2016} implies that $\mu = |J|$. 

Theorem~\ref{thm.global.foliation} is a global version of the local rigidity result for MOTS obtained in \cite{Galloway} and stated here as Theorem~\ref{thm.local.foliation}. We apply Lemma~\ref{lemma.weakly.outermost} to ensure that the \textsl{weakly outermost} condition (see Section~\ref{sec:preliminaries}) of this local rigidity result holds in our setting.

Imposing a convexity condition on the spacetime second fundamental form $K$, we are able to prove the following, stronger rigidity result. Note also that the boundary conditions are different from those in Theorem~\ref{thm.global.foliation}.

\begin{theorem}\label{thm.rigidity}
Let $(M,g,K)$ be an $n$-dimensional, $3 \le n \le 7$, compact-with-boundary initial data set. Suppose that 
$(M,g,K)$ satisfies the DEC, $\mu \geq |J|$. Suppose also that the boundary can be expressed as a disjoint union $\p M=\Sigma_0\cup S$ of non-empty unions of components such that the following conditions hold:
\begin{enumerate}
\item $\theta^+\le0$ on $\Sigma_0$ with respect to the normal that points into $M$.
\item $\theta^-\ge 2 \, (n-1) \, \epsilon$ on $S$ with respect to the normal that points out of $M$, where $\epsilon=0$ or $\epsilon = 1$.
\item $M$ satisfies the homotopy condition with respect to $\Sigma_0$.
\item $\Sigma_0$ satisfies the cohomology condition.
\item $K+\epsilon \, g$ is $(n-1)$-convex.
\end{enumerate}
Then, the following hold: 
\begin{enumerate}
\item[\rm (i)] $(\Sigma_0,g_0)$ is a flat torus, where $g_0$ is the induced metric on $\Sigma_0$.
\item[\rm (ii)] $(M,g)$ is isometric to $([0,\ell]\times\Sigma_0,dt^2+e^{2 \, \epsilon \, t} \, g_0)$ for some $\ell>0$.
\item[\rm (iii)] $K=(1-\epsilon) \, a \, dt^2-\epsilon \, g$ on $M$, where $a \in C^\infty(M)$ depends only on $t\in[0,\ell]$.
\item[\rm (iv)] $\mu= 0$ and $J=0$ on $M$.
\end{enumerate}
\end{theorem}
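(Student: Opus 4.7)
The plan is to reduce to Theorem~\ref{thm.global.foliation} and then exploit the $(n-1)$-convexity of $K+\epsilon g$ to pin down the warped-product structure and the form of $K$. Conditions (1), (3), and (4) of Theorem~\ref{thm.global.foliation} coincide with the corresponding assumptions here. For its condition (2), note that $(n-1)$-convexity of $K+\epsilon g$ means $\tr_V(K+\epsilon g)\ge 0$ for every $(n-1)$-plane $V$, which applied to $V=T_pS$ gives $\tr_S K\ge -(n-1)\epsilon$, so along $S$ with respect to the outward normal
\[
\theta_K^+ \;=\; \theta_K^- + 2\tr_S K \;\ge\; 2(n-1)\epsilon - 2(n-1)\epsilon \;=\; 0.
\]
Theorem~\ref{thm.global.foliation} then produces a diffeomorphism $M\cong [0,\ell]\times\Sigma_0$ whose leaves $\Sigma_t$ are flat MOTS tori with $\chi_t^+ = K|_{\Sigma_t}+A_{\Sigma_t}=0$, $\mu_K=|J_K|$, and $J_K(\nu_t)=-|J_K|$ on $M$.

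Next, I would reparameterize by arclength along $\nu_t$ so that $g = dt^2 + g_t$ and $A_{\Sigma_t} = \tfrac{1}{2}\partial_t g_t$. Since $\chi_t^+=0$ gives $A_{\Sigma_t}=-K|_{\Sigma_t}$, the target $g_t = e^{2\epsilon t}g_0$ is equivalent to the pointwise identity $(K+\epsilon g)|_{\Sigma_t}\equiv 0$. The $(n-1)$-convexity immediately yields only the trace inequality $\tr_{\Sigma_t}(K+\epsilon g) = (n-1)\epsilon - H_{\Sigma_t}\ge 0$, and the main technical step is to upgrade this to full pointwise vanishing. The approach I have in mind is twofold. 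First, $J_K(\nu_t)=-|J_K|$ makes the tangential part of the momentum constraint vanish, and combined with $\chi_t^+=0$ this forces $K(\nu_t,X)=0$ for $X\in T\Sigma_t$; thus $T\Sigma_t$ is an invariant subspace for $K+\epsilon g$. Second, $(n-1)$-convexity is pointwise equivalent to $\lambda_{\max}(K+\epsilon g)\le \tr_M(K+\epsilon g)$, so once the evolution equation for $\theta_K^+$ along the foliation (which is constantly zero on each leaf, yielding $\partial_t\theta_K^+\equiv 0$) is combined with the flatness of $\Sigma_t$ and the rigidity identities from Theorem~\ref{thm.global.foliation}, a saturation argument forces $H_{\Sigma_t}\equiv (n-1)\epsilon$, and then all eigenvalues of $(K+\epsilon g)|_{\Sigma_t}$ must vanish.

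With $A_{\Sigma_t}=\epsilon\,g_{\Sigma_t}$ established, integrating produces $g_t = e^{2\epsilon t} g_0$, and since $\Sigma_0$ is a flat torus this proves (i) and (ii). On each slice $K|_{\Sigma_t}=-\epsilon\,g_{\Sigma_t}$ and $K(\nu_t,\cdot)|_{T\Sigma_t}=0$, so $K = b(t,\cdot)\,dt^2 - \epsilon\,g_{\Sigma_t}$ on $M$ for some function $b$; the tangential Codazzi equation forces $b$ to depend only on $t$, and writing $b(t)=(1-\epsilon)a(t)-\epsilon$ gives (iii). Finally, a direct substitution of the warped-product metric and the explicit form of $K$ into the Hamiltonian and momentum constraint expressions yields $\mu_K\equiv 0$ and $J_K\equiv 0$, proving (iv). The main obstacle is the upgrade from the trace inequality to the full pointwise identity $(K+\epsilon g)|_{\Sigma_t}\equiv 0$ in the middle step, where the pointwise $(n-1)$-convexity must be combined carefully with the rigidity output from Theorem~\ref{thm.global.foliation}.
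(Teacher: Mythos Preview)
Your reduction to Theorem~\ref{thm.global.foliation} is fine, and the computation $\theta_K^+\ge 0$ on $S$ matches the paper. The difficulty is exactly where you flag it: upgrading the trace inequality $\tr_{\Sigma_t}(K+\epsilon g)\ge 0$ to the pointwise identity $(K+\epsilon g)|_{\Sigma_t}=0$. The two ingredients you propose do not close this gap.

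First, the claim that $J_K(\nu_t)=-|J_K|$ together with $\chi_t^+=0$ forces $K(\nu_t,X)=0$ for tangential $X$ is not justified. Vanishing of the tangential part of $J_K$ is a divergence-type condition on $K$, not a pointwise one; it does not by itself pin down the one-form $X_K=K(\nu_t,\cdot)|_{\Sigma_t}$. What the first variation of $\theta_K^+$ actually gives (after the usual rearrangement and integration) is only $X_K=\nabla\ln\phi$ on each leaf, where $\phi$ is the lapse of the MOTS foliation. You cannot simply ``reparametrize by arclength'' to set $\phi\equiv 1$, because the foliation produced by Theorem~\ref{thm.global.foliation} need not have lapse constant along the leaves. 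Second, even granting $T\Sigma_t$ invariant, $(n-1)$-convexity alone yields only $H_{\Sigma_t}\le (n-1)\epsilon$; nothing in your outline supplies the reverse inequality, so the ``saturation argument'' is not a complete mechanism.

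The paper closes this gap with a different idea: it introduces the auxiliary initial data $P=-K-2\epsilon g$, checks that $(M,g,P)$ again satisfies the DEC (here the $(n-1)$-convexity is used through $\tr\,K+n\epsilon\ge 0$) and the boundary hypotheses of Lemma~\ref{lemma.weakly.outermost}, and applies Theorem~\ref{thm.local.foliation} a second time. A maximum-principle comparison shows the two MOTS foliations coincide, so on each leaf both $\chi_K^+=0$ and $\chi_P^+=0$ hold. Adding these gives $A=\epsilon g_t=-K|_{\Sigma_t}$ directly, and from $X_P=-X_K$ together with $X_K=X_P=\nabla\ln\phi$ one gets $X_K=0$ and $\phi$ constant on each leaf. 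This dual-foliation trick is the missing step; without it, the upgrade you need does not follow from the information Theorem~\ref{thm.global.foliation} provides.
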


The definition of $(n-1)$-convexity is recalled in Section~\ref{sec:preliminaries}. Note that $K + \epsilon \, g$ is $(n-1)$-convex in the special case where it is positive semi-definite. The case $\epsilon = 1$ is relevant in the asymptotically hyperbolic or asymptotically hyperboloidal setting. Theorem~\ref{thm.rigidity} contains the following Riemannian result.

\begin{corollary}\label{cor.riemannian.case}
Let $(M,g)$ be an $n$-dimensional, $3 \le n \le 7$, connected, oriented, compact-with-boundary Riemannian manifold. Suppose that the scalar curvature of $(M,g)$ satisfies $R \ge-n \, (n-1) \, \epsilon$, where $\epsilon=0$ or $\epsilon = 1$. Suppose also that the boundary can be expressed as a disjoint union $\p M=\Sigma_0\cup S$ of non-empty unions of components such that the following conditions hold:
\begin{enumerate}
\item The mean curvature of $\Sigma_0$ in $(M,g)$ with respect to the normal that points into $M$ satisfies $H\le(n-1) \, \epsilon$.
\item The mean curvature of $S$ in $(M,g)$ with respect to the normal that points out of $M$ satisfies $H\ge(n-1) \, \epsilon$.
\item $M$ satisfies the homotopy condition with respect to $\Sigma_0$.
\item $\Sigma_0$ satisfies the cohomology condition.
\end{enumerate} 
Then $(\Sigma_0,g_0)$ is a flat torus, where $g_0$ is the induced metric on $\S_0$. Moreover, $(M,g)$ is isometric to $([0,\ell]\times\Sigma_0,dt^2+e^{2 \, \epsilon \, t} \, g_0)$ for some $\ell>0$.
\end{corollary}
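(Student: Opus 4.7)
The plan is to derive the corollary as a direct special case of Theorem \ref{thm.rigidity}, applied to the initial data set $(M, g, K)$ with the choice $K := -\epsilon \, g$. The homotopy and cohomology hypotheses (3) and (4) transfer verbatim, so the real content is to verify that this particular choice of $K$ converts the scalar curvature and mean curvature hypotheses of the corollary into the corresponding initial data hypotheses of Theorem \ref{thm.rigidity}: the DEC, the null expansion signs at the two boundary components, and the $(n-1)$-convexity of $K + \epsilon \, g$.

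For the DEC I would compute $\operatorname{tr} K = -n\epsilon$ and $|K|^2 = n\epsilon^2$, leading to $\mu_K = \tfrac{1}{2}(R + n(n-1)\epsilon)$ and $J_K = \operatorname{div}((n-1)\epsilon \, g) = 0$; the scalar curvature bound $R \geq -n(n-1)\epsilon$ then gives $\mu_K \geq 0 = |J_K|$. For the boundary conditions, since $\operatorname{tr}_\Sigma(-\epsilon \, g) = -(n-1)\epsilon$ on any hypersurface $\Sigma \subset M$, the mean curvature hypotheses (1) and (2) translate immediately into $\theta_K^+ \leq 0$ along $\Sigma_0$ (with the inward normal) and $\theta_K^- \geq 2(n-1)\epsilon$ along $S$ (with the outward normal). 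Finally $K + \epsilon \, g \equiv 0$, which is trivially $(n-1)$-convex.

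With every hypothesis of Theorem \ref{thm.rigidity} satisfied, conclusions (i) and (ii) of that theorem yield precisely the flat-torus and warped-product statements asserted by the corollary. Conclusions (iii) and (iv) of Theorem \ref{thm.rigidity} serve here as automatic consistency checks: (iii) recovers the chosen $K = -\epsilon \, g$ (with $a \equiv 0$ in the case $\epsilon = 0$), and (iv) is already built into the computation of $\mu_K$ and $J_K$ above. I do not anticipate any real obstacle: the whole argument reduces to the one-line choice of $K$, and the only potential pitfalls are bookkeeping — tracking signs in the trace computation and the inward- versus outward-pointing normal conventions in the two boundary conditions.
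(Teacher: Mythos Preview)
Your proposal is correct and matches the paper's own proof essentially line for line: both set $K=-\epsilon\,g$, verify $\mu_K=\tfrac12(R+n(n-1)\epsilon)\ge 0$ and $J_K=0$, translate the mean curvature bounds into $\theta_K^+\le 0$ on $\Sigma_0$ and $\theta_K^-\ge 2(n-1)\epsilon$ on $S$, and then invoke Theorem~\ref{thm.rigidity}. If anything, your write-up is slightly more complete, since you explicitly note that $K+\epsilon\,g\equiv 0$ is $(n-1)$-convex, a hypothesis of Theorem~\ref{thm.rigidity} that the paper's proof leaves implicit.
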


Note the similarity of Corollary~\ref{cor.riemannian.case} with the rigidity result Theorem~1 in \cite{Croke} in the Ricci curvature setting, due to C.~B.~Croke and B.~Kleiner. Theorem~1.1 in \cite{ACG} follows from Corollary~\ref{cor.riemannian.case} in the special case where $\epsilon = 1$. Corollary~\ref{cor.riemannian.case} should also be compared to the results of H.~C.~Jang and the second-named author in \cite{GalJang}, which require an \textsl{outermost} condition. An alternative proof of Corollary~\ref{cor.riemannian.case} may be given using area minimization when $\epsilon = 0$ and minimization of the so-called brane action when $\epsilon = 1$. The MOTS methodology presented in this paper gives a synthetic way of treating both cases, and much more, simultaneously.  \\

We review some background material on MOTS in Section~\ref{sec:preliminaries}. In Section~\ref{sec:weaklyoutermost}, we establish criteria to verify the weakly outermost condition for MOTS. In Section~\ref{sec:mainproof}, we give a proof of Theorem~\ref{thm.global.foliation}. In Section~\ref{sec:further}, we give the proofs of Theorem~\ref{thm.rigidity} and Corollary~\ref{cor.riemannian.case} and also consider some additional results. Finally, in Section~\ref{sec:Minkowski}, we show how to embed the initial data set in Theorem~\ref{thm.rigidity} into a quotient of Minkowski space. \\

{\bf Acknowledgments.} Michael Eichmair is supported by the START-Project Y963-N35 of the Austrian Science Fund.  Gregory J. Galloway acknowledges the support of NSF Grant DMS-1710808. Abraão Mendes is grateful to the University of Miami where much of his work on this project was carried out. He was supported in part by the Coordenação de Aperfeiçoamento de Pessoal de Nível Superior - Brasil (CAPES) - Finance Code 001. The authors would like to thank Dan A.~Lee and Pengzi Miao for useful discussions on the topic of this paper.

\section{Preliminaries} \label{sec:preliminaries}

We recall several results for MOTS that are needed in this paper.

Let $(M, g, K)$ be an initial data set and $\Sigma \subset M$ be a closed MOTS with unit normal $\nu$. 

Let $\{\Sigma_t\}_{|t| < \epsilon}$ be a variation of $\Sigma$, where 
\[
\Sigma_t = \{ \exp_x (t \, \phi (x) \, \nu(x)) : x \in \Sigma\}
\]
for some $\phi \in C^\infty(M)$. We may view the expansion scalar $\theta^+$ of these hypersurfaces as a parameter-dependent function on $\Sigma$. We recall from e.g.~\cite[p.~861]{AnderssonMarsSimon} or \cite[p.~20]{AEM} that 
\begin{align} \label{eq:thetader}
 \frac{d}{dt} \Big|_{t = 0} \theta^+ ( t, \, \cdot \,) = L \, \phi
\end {align}
where 
\[
L \, \phi = - \Delta \phi + 2 \, \langle X, \nabla \phi \rangle + \left( Q - |X|^2 + \text{div} (X) \right)  \phi
\]
and 
\[
Q = \frac{1}{2} \, R_\Sigma  - \frac{1}{2} \, |\chi^+|^2 - \mu - J (\nu).
\]
Here, $\Delta$ is the non-positive definite Laplace-Beltrami operator, $\nabla$ the gradient, $\text{div}$ the divergence, and $R_\Sigma$ the scalar curvature of $(\Sigma, \langle\, \cdot \, , \, \cdot \, \rangle)$. Moreover, $X$ is the tangent field of $\Sigma$ that is dual to the form $K (\nu, \, \cdot \,)$.

If there is a $\phi \in C^\infty(\Sigma)$ with $\phi > 0$ and 
\[
L \, \phi \geq 0,
\]
then $\Sigma$ is called a \textsl{stable} MOTS; cf.~\cite[p.~868]{AnderssonMarsSimon}. We refer in passing to related notions of stability for MOTS with boundary and their applications; see e.g.~\cite[p.~3]{Galloway-Murchadha:2008}, \cite[Section~2]{potpourri}, or \cite[Section~5]{ALY}.

Assume now that $\Sigma$ is a boundary in $M$. More precisely, assume that $\nu$ points towards a top-dimensional submanifold $M_+ \subset M$ such that $\partial M_+ = \Sigma \cup S$ where $S$ is a union of components of $\partial M$. We think of $M_+$ as the region outside of $\Sigma$. Then $\Sigma$ is called an \textsl{outermost MOTS} if there is no closed embedded surface in $M_+$ with $\theta^+ \leq 0$ that is homologous to and different from $\Sigma$. If there is no such surface with $\theta^+ < 0$, then $\Sigma$ is called \textsl{weakly outermost}.

We now state the local rigidity result for MOTS from \cite{Galloway} mentioned in the introduction.

\begin{theorem}[\cite{Galloway}, Theorem~3.1] 
\label{thm.local.foliation}
Let $(M,g,K)$ be an $n$-dimensional, $n \ge 3$, initial data set that satisfies the DEC, $\mu \geq |J |$. Suppose that $\Sigma \subset M$ is a connected weakly outermost MOTS that does not support a metric of positive scalar curvature. There is a neighborhood $U \subset M$ of $\Sigma$ with $U \cap M_+ \cong[0, \delta)\times\Sigma$ such that the following hold:
\begin{enumerate} [(i)]
\item $\Sigma_0 = \Sigma$ where $\Sigma_t \cong \{t\} \times \Sigma$ for every $t \in [0, \delta)$.
\item $\chi^+ = 0$ on $\Sigma_t$ for every $t\in[0,\delta)$ with respect to the unit normal $\nu_t$ in direction of the foliation. In particular, $\Sigma_t$ is a MOTS  for every $t \in [0, \delta)$.
\item The metric induced on $\Sigma_t$ by $g$ is Ricci-flat for every $t\in[0,\delta)$.
\item $\mu +J(\nu_t)=0$ on $\Sigma_t$ for every $t\in[0,\delta)$. In particular, $\mu = |J|$ on $U \cap M_+$.
\end{enumerate}
\end{theorem}
Theorem~\ref{thm.local.foliation} implies the following result related to the topology of apparent horizons. 

\begin{corollary} \label{cor.outermost}
Let $(M,g,K)$ be an $n$-dimensional, $n \ge 3$, initial data set that satisfies the DEC, $\mu \ge|J |$. Suppose that $\Sigma \subset M$ is an outermost MOTS in $(M,g,K)$. Then $\Sigma$ admits a metric of positive scalar curvature.
\end{corollary}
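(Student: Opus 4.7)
My plan is to argue by contradiction, reducing to the connected case and then directly invoking Theorem~\ref{thm.local.foliation}.

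First I would observe that a disjoint union of closed manifolds admits a metric of positive scalar curvature if and only if each component does. So, assuming toward a contradiction that $\Sigma$ does not carry such a metric, I can pick a connected component $\Sigma^{(i)}$ of $\Sigma$ that itself does not admit a metric of positive scalar curvature. The idea is then to produce a MOTS strictly inside $M_+$ but close to $\Sigma^{(i)}$, take its union with the remaining components of $\Sigma$, and use this as a weakly outer trapped competitor, contradicting the locally outermost hypothesis on $\Sigma$.

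Next I would show that $\Sigma^{(i)}$ is locally weakly outermost in $(M,g,K)$. Let $V$ be the neighborhood of $\Sigma$ from the definition of locally outermost, and suppose, toward a contradiction, that there exists an outer trapped surface $\Sigma'^{(i)}$ in $M_+ \cap V$ that is homologous to $\Sigma^{(i)}$ within a small neighborhood of $\Sigma^{(i)}$ (small enough to be disjoint from the other components $\Sigma^{(j)}$). Then the disjoint union $\widetilde{\Sigma} = \Sigma'^{(i)} \cup \bigcup_{j \neq i}\Sigma^{(j)}$ has $\theta_K^+ \leq 0$ on each piece, is homologous to $\Sigma$ in $M_+\cap V$, and is distinct from $\Sigma$. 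This contradicts $\Sigma$ being locally outermost. Hence $\Sigma^{(i)}$ is locally weakly outermost.

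Now I can apply Theorem~\ref{thm.local.foliation} to the connected, locally weakly outermost MOTS $\Sigma^{(i)}$, which by assumption does not support a metric of positive scalar curvature. The conclusion produces a foliation $\{\Sigma_t\}_{t \in [0,\delta)}$ by MOTS of a one-sided neighborhood of $\Sigma^{(i)}$ in $M_+^{(i)}$. For any $t > 0$ small enough that $\Sigma_t$ lies in $V$ (and remains disjoint from the other components), the surface $\Sigma_t \cup \bigcup_{j \neq i} \Sigma^{(j)}$ is weakly outer trapped, homologous to $\Sigma$, contained in $M_+\cap V$, and distinct from $\Sigma$. This again contradicts the locally outermost assumption on $\Sigma$, completing the proof.

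I expect the only real subtlety to be the homology bookkeeping in step two, namely verifying that replacing one component of a multi-component MOTS by a nearby homologous surface yields a total surface that is homologous to the original $\Sigma$ and lies inside the local neighborhood $V$. Once small neighborhoods of distinct components are chosen disjoint from one another, this is straightforward, and the rest of the argument is a direct application of Theorem~\ref{thm.local.foliation}.
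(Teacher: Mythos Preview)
Your proposal is correct and is precisely the argument the paper leaves implicit when it states that Corollary~\ref{cor.outermost} follows from Theorem~\ref{thm.local.foliation}. The reduction to a single component via the observation that positive scalar curvature is a component-wise property, together with the homology bookkeeping you outline, is exactly the routine verification the paper omits.
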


We will apply the following existence result for MOTS.  It was obtained by L.~Andersson and J.~Metzger \cite{AnderssonMetzger2009} in dimension $n = 3$ and then, using different techniques, by the first-named author \cite{Eichmair2009, Eichmair2010} in dimensions $3 \leq n \leq 7$. The approaches in \cite{AnderssonMetzger2009, Eichmair2009, Eichmair2010} are all based on an idea of R.~Schoen to construct MOTS between suitably trapped hypersurfaces by forcing a blow up of the Jang equation. See also \cite{AEM} for a survey of these existence results. 

\begin{theorem} [\cite{AnderssonMetzger2009, Eichmair2009, Eichmair2010}] \label{thm:existenceoutermostMOTS}
Let $(M,g,K)$  be an $n$-dimensional, $3 \le n \le 7$, compact-with-boundary initial data set.  Suppose that the boundary can be expressed as a disjoint union $\p M=\Sigma_{\inn}\cup\Sigma_{\out}$ where $\Sigma_{\inn}, \, \Sigma_{\out}$ are non-empty unions of components of $\partial M$ with $\theta^+\le 0$ on $\Sigma_{\inn}$ with respect to the normal pointing into $M$ and with $\theta^+>0$ on $\Sigma_{\out}$ with respect to the normal pointing out of $M$. Then there is an outermost MOTS in $(M,g,K)$ that is homologous to $\Sigma_{\out}$. 
\end{theorem}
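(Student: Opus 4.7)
The proof of this theorem is the substance of the cited references; here I sketch the strategy one would follow. The plan is to first obtain at least one MOTS homologous to $\Sigma_{\out}$ via the Jang-equation blow-up technique of R.~Schoen, and then to upgrade this to an outermost MOTS by a compactness argument over the family of weakly outer trapped surfaces.

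For the existence step, I would work with the Jang equation on $M$: a function $f\colon M\to\R$ satisfies Jang's equation if its graph in $(M\times\R,\, g+dt^2)$, equipped with the trivial extension of $K$ in the $t$-direction, has mean curvature equal to the trace over the graph of that extension. The geometric key is that a vertical cylinder $S\times\R$ has Jang operator equal to $\theta^+_S$ of the cross-section $S\subset M$. Thus the trapping conditions provide natural barriers: on $\Sigma_{\out}$ the strict inequality $\theta^+_K>0$ gives a one-sided barrier that obstructs global solvability, while on $\Sigma_{\inn}$ the inequality $\theta^+_K\le 0$ is compatible with imposing Dirichlet data that force $f\to+\infty$. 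Solving the regularized Jang equation (replacing the Jang operator $\mathcal J(f)$ by $\mathcal J(f)-\tau f$) for each $\tau>0$ and then letting $\tau\to 0$, the solutions $f_\tau$ develop a blow-up set in the interior of $M$ whose structure, dictated by Jang's equation, is a smooth MOTS homologous to $\Sigma_{\out}$. An alternative route, which is the one followed by the first-named author in dimensions $3\le n\le 7$, replaces the PDE with a constrained perimeter-minimization problem on $M\times\R$; the functional is designed so that vertical cylinders over MOTS are critical points, and codimension-one regularity for minimizing boundaries delivers smoothness in this dimension range.

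To extract an \emph{outermost} MOTS rather than just any MOTS, I would then consider the family $\mathcal W$ of smooth weakly outer trapped hypersurfaces homologous to $\Sigma_{\out}$, and for each $\Sigma\in\mathcal W$ let $M(\Sigma)$ denote the region bounded by $\Sigma$ and $\Sigma_{\out}$. Choosing a sequence $\Sigma_k\in\mathcal W$ with $M(\Sigma_k)$ shrinking exhaustively to $\bigcap_{\Sigma\in\mathcal W}M(\Sigma)$, I would apply the existence step on each $M(\Sigma_k)$ (using $\Sigma_k$ as the inner barrier and $\Sigma_{\out}$ as the outer one) to produce MOTS $\Sigma'_k$. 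The stability inequality for MOTS, obtained from the first-variation formula \eqref{eq:thetader} via the principal-eigenvalue characterization of stability, supplies uniform curvature and area estimates on $\Sigma'_k$, so a subsequence converges smoothly to a MOTS $\Sigma^\infty$ homologous to $\Sigma_{\out}$. By construction no weakly outer trapped surface homologous to $\Sigma_{\out}$ lies strictly outside $\Sigma^\infty$, so $\Sigma^\infty$ is outermost.

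The main obstacle is the analytic core of the existence step: rigorously controlling the geometry of the Jang blow-up set (or of the vertical-cylinder limit in the perimeter-minimization approach) requires delicate uniform a priori estimates on the approximating solutions, and the dimension restriction $3\le n\le 7$ enters precisely to rule out singularities of codimension-one area-minimizing hypersurfaces. The subsequent compactness step for extracting the outermost MOTS is comparatively standard once uniform curvature estimates for stable MOTS are in hand.
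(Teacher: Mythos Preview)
The paper does not supply a proof of this theorem at all: it is quoted as an external result, with the surrounding text attributing the $n=3$ case to Andersson--Metzger and the general $3\le n\le 7$ case to Eichmair, both ``based on an idea of R.~Schoen to construct MOTS between suitably trapped hypersurfaces by forcing a blow up of the Jang equation.'' So there is nothing in the paper to compare your argument against beyond that one-sentence description, with which your sketch is consistent.

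A couple of remarks on the sketch itself. First, you present the regularized Jang equation and the constrained perimeter approach as two separate routes, but in Eichmair's work they are intertwined: the key regularity input is that graphs of solutions to the capillarity-regularized Jang equation are \emph{almost minimizing} boundaries in $M\times\R$, which is what imports the codimension-one regularity theory and explains the dimension restriction. Second, your extraction of the outermost MOTS by exhausting $\mathcal W$ and passing to a limit is morally right but not quite how the cited papers proceed; they instead show directly that the topological boundary of the weakly outer trapped region is a smooth embedded MOTS, using the Jang blow-up construction together with a perturbation argument to rule out pinching. Your version would need an additional step to guarantee embeddedness of the limit $\Sigma^\infty$ (a priori the $\Sigma'_k$ could accumulate with multiplicity or self-touch in the limit), which is exactly the issue the trapped-region argument is designed to handle.
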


Some of our results require a convexity condition on the spacetime second fundamental form $K$ of the initial data set $ (M, g, K)$. We say that a symmetric $(0, 2)$-tensor field $P$ is \textsl{$(n-1)$-convex} if, at every point, the sum of the smallest $(n-1)$ eigenvalues of $P$ with respect to $g$ is non-negative. In particular, if $P$ is $(n -1)$-convex, then $\text{tr}_\Sigma \, P \geq 0$ for every hypersurface 
$\Sigma \subset M$. This convexity condition has been used by the third-named author in \cite{Mendes} in a related context.

\section{The Weakly Outermost Condition} \label{sec:weaklyoutermost}

In this section, we assume that $(M,g,K)$ is a compact initial data set. While we do not assume a priori that $M$ is topologically a product, we require a more general condition of a similar flavor. Let $\Sigma_0$ be a union of components of $\partial M$. We say that $M$ satisfies the \textsl{homotopy condition} with respect to $\Sigma_0$ provided that  there exists a continuous map $\rho:M\to\Sigma_0$ such that $\rho\circ i:\Sigma_0\to\Sigma_0$ is homotopic to $\id_{\Sigma_0}$ where $i:\Sigma_0 \to M$ is the inclusion map. Since $M$ is connected by assumption, this condition implies that $\Sigma_0$ is connected. Note that this homotopy condition is satisfied if there is a retraction of $M$ onto $\Sigma_0$.

An orientable, closed manifold $N$ of dimension $m$ is said to satisfy the \textsl{cohomology condition} if there are classes $\omega_1,\, \ldots,\, \omega_m\in H^1(N,\Z)$ whose cup product
\begin{align*} 
\omega_1\smile\cdots\smile\omega_m\in H^m(N,\Z)
\end{align*}
is non-zero. Such a manifold $N$ has a component that does not admit a metric of positive scalar curvature; see \cite[Theorem~5.2]{SY2017} and the discussion of \cite[Theorem~2.28]{Lee}. Note that every manifold diffeomorphic to $\T^m$ or, more generally, to $\mathbb{T}^m \, \# \, Q$ with $Q$ oriented and closed satisfies the cohomology condition; cf.~\cite[Theorem~5.1]{SY2017}.

\begin{lemma} \label{lem.cohomology.hypothesis}
Let $M$ be an orientable, compact $n$-dimensional, $n \geq 3$, manifold with boundary. Let $\Sigma_0$ be a union of components of $\partial M$. Suppose that $M$ satisfies the homotopy condition with respect to $\Sigma_0$ and that $\Sigma_0$ satisfies the cohomology condition. Then every closed, embedded hypersurface $\Sigma \subset M$ homologous to $\Sigma_0$ satisfies the cohomology condition.
\end{lemma}

In particular, by the preceding discussion, $\Sigma$ does not support a metric of positive scalar curvature.

\begin{proof}
Let $\rho:M\to\Sigma_0$ be a continuous map such that $\rho\circ i\simeq\id_{\Sigma_0}$, where $i:\Sigma_0 \to M$ is the inclusion map. Let $\omega_1,\, \ldots, \, \omega_{n-1}\in H^1(\Sigma_0,\Z)$ be classes with $\omega_1\smile\cdots\smile\omega_{n-1}\neq0$. Let $j:\Sigma \to M$ be the inclusion map. The map $\sigma=\rho\circ j:\Sigma\to\Sigma_0$ induces a map $H_{n-1}(\Sigma,\Z)\to H_{n-1}(\Sigma_0,\Z)$. Since $\Sigma$ and $\Sigma_0$ are homologous,
\begin{align*}
\sigma_*[\Sigma]=\rho_*(j_*[\Sigma])=\rho_*(i_*[\Sigma_0])=(\id_{\Sigma_0})_*[\Sigma_0]=[\Sigma_0].
\end{align*}
Using this, we can conclude the proof arguing as in \cite[p.~45]{Lee}. Note that
\begin{align*}
\sigma_*([\Sigma]\frown(\sigma^*\omega_1\smile\cdots\smile\sigma^*\omega_{n-1}))=[\Sigma_0]\frown(\omega_1\smile\cdots\smile\omega_{n-1})\neq0.
\end{align*}
In particular, $\sigma^*\omega_1\smile\cdots\smile\sigma^*\omega_{n-1}\neq0$.
\end{proof}

We combine the previous lemma with Corollary~\ref{cor.outermost} and Theorem~\ref{thm:existenceoutermostMOTS} to show that the weakly outermost condition follows from seemingly weaker assumptions.

\begin{lemma}\label{lemma.weakly.outermost}
Let $(M,g,K)$ be an $n$-dimensional, $3 \le n \le 7$,  compact-with-boundary initial data set.  Suppose that $(M,g,K)$ satisfies the DEC, $\mu \ge|J|$. Suppose also that the boundary can be expressed as a disjoint union $\partial M = \Sigma_0 \cup S$ of non-empty unions of components such that the following conditions hold:
\begin{enumerate}
\item $\theta^+\le0$ on $\Sigma_0$ with respect to the normal that points into $M$.
\item $\theta^+\ge0$ on $S$ with respect to the normal that points out of $M$.
\item $M$ satisfies the homotopy condition with respect to $\Sigma_0$.
\item $\Sigma_0$ satisfies the cohomology condition.
\end{enumerate}
Then $\Sigma_0$ is a weakly outermost MOTS in $(M,g,K)$.
\end{lemma}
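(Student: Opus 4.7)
The plan is to argue by contradiction and confront Corollary~\ref{cor.outermost} with Lemma~\ref{lem.cohomology.hypothesis}. Suppose that $\Sigma_0$ is not a weakly outermost MOTS. I want to produce a locally outermost MOTS $\Sigma'$ strictly inside $M$ that is homologous to $\Sigma_0$.

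First I would convert the failure of the weakly outermost MOTS property into the existence of a strictly outer trapped hypersurface $\tilde{\Sigma} \subset M_+$ homologous to $\Sigma_0$. If the failure is already due to a genuinely outer trapped hypersurface, there is nothing to do. If instead $\Sigma_0$ itself is merely weakly outer trapped but not a MOTS, so $\theta_K^+(p)<0$ somewhere, I would construct $\tilde{\Sigma}$ by a small inward perturbation of $\Sigma_0$ supported near the open set where $\theta_K^+<0$. Given $\tilde{\Sigma}$, I would apply Theorem~\ref{thm:existenceoutermostMOTS} to the region $\tilde{M}\subset M$ between $\tilde{\Sigma}$ and $S$, viewed as a compact-with-boundary initial data set with inner boundary $\tilde{\Sigma}$ (where $\theta_K^+<0$ strictly) and outer boundary $S$ (where $\theta_K^+\ge 0$). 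The strict positivity on $\Sigma_{\out}$ required by Theorem~\ref{thm:existenceoutermostMOTS} is not quite met, so I would close this gap by a standard perturbation-and-limit argument: modify $K$ to a family $K_\epsilon$, supported in a thin collar of $S$ inside $\tilde{M}$, so that $\theta_{K_\epsilon}^+>0$ strictly on $S$; verify, using the explicit formulas for $\mu_K$ and $J_K$, that the DEC is preserved for small $\epsilon$; apply Theorem~\ref{thm:existenceoutermostMOTS} to each $(\tilde{M},g,K_\epsilon)$ to obtain an outermost MOTS $\Sigma'_\epsilon$; and extract a subsequential limit. This yields a locally outermost MOTS $\Sigma'\subset \tilde{M}\subset M_+$ homologous to $S$, and hence to $\Sigma_0$ (using $\Sigma_0\cup S=\partial M$).

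Once $\Sigma'$ is in hand, the contradiction is immediate. By Corollary~\ref{cor.outermost}, $\Sigma'$ admits a metric of positive scalar curvature, so each connected component of $\Sigma'$ does. On the other hand, since $[\Sigma']=\pm[\Sigma_0]$ in $H_{n-1}(M,\mathbb{Z})$, running the cup product computation in the proof of Lemma~\ref{lem.cohomology.hypothesis} component-wise on $\Sigma'$ shows that at least one connected component $\Sigma'_j$ carries classes $\omega_1,\dots,\omega_{n-1}\in H^1(\Sigma'_j,\mathbb{Z})$ with nonzero cup product in top cohomology. The Schoen--Yau theorem then rules out a metric of positive scalar curvature on $\Sigma'_j$, the desired contradiction.

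The main obstacle I anticipate is the perturbation-and-limit step in constructing $\Sigma'$: verifying that $K_\epsilon$ can be arranged to strictly bump up $\theta_K^+$ along $S$ without breaking the DEC (which only holds weakly), and that the $\Sigma'_\epsilon$ converge subsequentially to a nondegenerate smooth MOTS still homologous to $\Sigma_0$, requires careful bookkeeping with the explicit formulas for $\mu_K$ and $J_K$. Everything else is essentially formal once that step is in place.
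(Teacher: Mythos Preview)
Your overall strategy---produce an outermost MOTS homologous to $\Sigma_0$ via Theorem~\ref{thm:existenceoutermostMOTS} and confront Corollary~\ref{cor.outermost} with Lemma~\ref{lem.cohomology.hypothesis}---matches the paper's, but you have missed the device that makes the existence step go through without any approximation. The paper does not perturb $K$ near $S$. Instead, on the region $W$ between the strictly trapped $\Sigma$ and $S$, it passes to the initial data $(W,g,-K)$ and \emph{reverses the roles of inner and outer boundary}: with respect to the normal pointing into $W$ one has $\theta_{-K}^+=-\theta_K^+\le 0$ on $S$, and with respect to the normal pointing out of $W$ one has $\theta_{-K}^+=-\theta_K^+>0$ on $\Sigma$, so Theorem~\ref{thm:existenceoutermostMOTS} applies directly. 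Crucially, $\mu_{-K}=\mu_K$ and $|J_{-K}|=|J_K|$, so the DEC holds for $(W,g,-K)$ exactly, and Corollary~\ref{cor.outermost} applies to the resulting outermost MOTS with no limit argument needed.

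The obstacle you flag in your route is genuine and, as written, not overcome. If the DEC is saturated near $S$, a compactly supported modification of $K$ that strictly raises $\theta^+$ on $S$ will in general destroy $\mu\ge|J|$; the explicit formulas for $\mu$ and $J$ do not hand you a free direction. If instead you drop the DEC for $K_\epsilon$ and pass to a limit, the limit $\Sigma'$ is at best a \emph{stable} MOTS for $K$, which is not the hypothesis of Corollary~\ref{cor.outermost} (local outermostness is strictly stronger and does not obviously survive limits); moreover nothing prevents $\Sigma'_\epsilon\to S$, in which case the limit has no outer side in $M$ and the corollary does not even apply. A smaller point: a perturbation of $\Sigma_0$ supported only near $\{\theta_K^+<0\}$ will not yield $\theta_K^+<0$ everywhere on $\tilde\Sigma$; the paper invokes \cite[Lemma~5.2]{AnderssonMetzger2009} to flow all of $\Sigma_0$ inward and obtain a strictly trapped surface, which is what is actually needed once $\Sigma$ is to serve as the \emph{outer} barrier in the $-K$ picture.
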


In the proof of this result below, we compute expansion scalars with respect to different spacetime second fundamental forms. For clarity, we indicate by a subscript which spacetime second fundamental form is used in the computation.

\begin{proof}

First, we show that $\Sigma_0$ is a MOTS. 

Suppose that $\theta_K^+$ is not identically zero on $\Sigma_0$. It follows from \cite[Lemma~5.2]{AnderssonMetzger2009} that there is a hypersurface $\Sigma \subset M$ obtained as a small perturbation of $\Sigma_0$ into $M$ such that $\theta_K^+<0$ on $\Sigma$ with respect to the normal pointing away from $\Sigma_0$. Let $W$ be the connected, compact region bounded by $\Sigma$ and $S$ in $M$. Observe that $\theta_{-K}^+\le0$ on $S$ with respect to the normal that points into $W$ and $\theta_{-K}^+>0$ on $\Sigma$ with respect to the normal that points out of $W$. Note that the initial data set $(W, g, - K)$ satisfies the DEC. Applying Theorem~\ref{thm:existenceoutermostMOTS} to this initial data set, we obtain an outermost MOTS $\widetilde\Sigma$ in $(W,g,-K)$ that is homologous to and disjoint from $\Sigma$. Clearly, $\widetilde\Sigma$ is homologous to $\Sigma_0$. By 
Lemma~\ref{lem.cohomology.hypothesis}, $\widetilde\Sigma$ does not support a metric of positive scalar curvature. This contradicts Corollary~\ref{cor.outermost} applied to the initial data set $(W,g,-K)$. Thus $\Sigma_0$ is a MOTS in $(M, g, K)$.

Next, we show that $\Sigma_0$ is a weakly outermost MOTS. 

Suppose, by contradiction, that $\Sigma_0$ is not weakly outermost. Then there is a hypersurface $\Sigma \subset M$ homologous to $\Sigma_0$ such that $\theta_K^+<0$ on $\Sigma$ with respect to the normal that points away from $\Sigma_0$. We may assume that each component of $\Sigma$ is homologically non-trivial in $M$. Let $W$ be the compact region in $M$ bounded by $\Sigma$ and $S$. Assume first that $W$ is connected. Applying Theorem~\ref{thm:existenceoutermostMOTS}, we obtain an outermost MOTS $\widetilde\Sigma$ in $(W,g,-K)$ homologous to $\Sigma$. As before, by 
Lemma~\ref{lem.cohomology.hypothesis}, $\widetilde\Sigma$ does not support a metric of positive scalar curvature. This contradicts Corollary~\ref{cor.outermost}. In the general case, we apply this argument separately to each component of $W$. At least one of the outermost MOTS obtained in this way does not carry a metric of positive scalar curvature. Again, this contradicts Corollary~\ref{cor.outermost}.
\end{proof}

\section{Proof of Theorem~\ref{thm.global.foliation}} \label{sec:mainproof}

Theorem~\ref{thm.global.foliation} may be viewed as a global version of Theorem~\ref{thm.local.foliation}. We emphasize that Theorem~\ref{thm.global.foliation} does not require the \textsl{weakly outermost} assumption. 

We start with the following observation. 

\begin{lemma} \label{lemma.4.1} Under the assumptions of Theorem~\ref{thm.global.foliation}, there is a neighborhood of $\Sigma_0$ in $M$ diffeomorphic to $[0,\delta) \times \Sigma_0$ such that the leaves $\Sigma_t \cong \{t\} \times \Sigma_0$ satisfy properties (ii)-(iv) of the conclusion of Theorem~\ref{thm.global.foliation}.
\begin {proof}

According to Lemma~\ref{lemma.weakly.outermost}, $\Sigma_0$ is a weakly outermost MOTS in $(M, g, K)$. We may apply Theorem~\ref{thm.local.foliation}. We claim that the foliation from  Theorem~\ref{thm.local.foliation} has the asserted properties. It only remains to show that each $\Sigma_t$ is isometric to a flat torus. To see this, note the estimate $b_1(\S_t) \geq n$ for the first Betti number of $\S_t$. This follows from the cohomology condition, Poincaré duality, and the fact that $H^n(M,\bbZ)$ is torsion free. Conversely, by a classical result of Bochner, see e.g.~\cite[p.~208]{Petersen}, it holds that $b_1(\S_t) \le n$ with equality if and only if $\S_t$ is isometric to a flat torus. 
\end {proof}

\end{lemma}

\begin{proof}[Proof of Theorem~\ref{thm.global.foliation}] We use $\nu$ to denote the unit normal field of the foliation $\{\Sigma_t\}_{t \in [0, \delta)}$ from Lemma~\ref{lemma.4.1}. Note that the divergence of $\nu$ evaluated on $\Sigma_t$ is equal to the mean curvature of $\Sigma_t$. Since every leaf $\Sigma_t$ is a MOTS, we see that the divergence of $\nu$ is bounded. By the divergence theorem, 
\[
\text{vol} (\Sigma_t) = \text{vol} (\Sigma_0) + \int_{U_t} \text{div} (\nu)
\]
where $U_t \cong [0, t] \times \Sigma_0$ is the collar between $\Sigma_0$ and $\Sigma_t$. This argument shows that $\vol (\Sigma_t)$ is bounded independently of $t \in [0, \delta)$. 

Note that the second fundamental form of each $\Sigma_t$ is bounded independently as well, since the null second fundamental form of each $\Sigma_t$ vanishes. 

To proceed, we briefly recall a standard fact. For convenience of exposition, we extend $(M, g)$ across its boundary to a homogeneously regular manifold. Given $C > 0$, there is a small constant $r > 0$ with the following property. Let $\Sigma \subset M$ be a closed and two-sided surface whose second fundamental form is bounded by $C$. Let $p \in M$ be such that $\Sigma \cap B_r(p) \neq \emptyset$. Then $\Sigma$ contains the graph of a function $f : \{ y \in \R^{n-1} : |\, y \, |\ < r\} \to \R$ with $| \, f (0) \, |\ < r$, $| \, Df \, |\ \leq 1$, and $|\, D^2 f \, |\ \leq 2 \, C$, where an appropriately rotated geodesic coordinate system with center at $p$ is used to identify $B_{2 \, r} (p)$ with the Euclidean ball $\{x \in \R^n : |\, x \, |\ < 2 \, r\}$. In fact, it is possible to choose the geodesic coordinate system so that $\Sigma \cap B_r(p)$ is covered by such graphs.

It follows from these facts that the leaves $\{\Sigma_t\}_{t \in [0, \delta)}$ have a smooth \textsl{immersed} limit $\Sigma_\delta$ as $t \nearrow \delta$. We use an idea of L.~Andersson and J.~Metzger \cite{AnderssonMetzger2009} to show that $\Sigma_\delta$ is embedded. For if not, we can find for every $\eta > 0$ a leaf $\Sigma_t$ and $p \in M$ such that $\Sigma_t \cap B_r(p)$ contains the graphs of two functions $f_1, \, f_2$ with the properties stated above and such that $| \, f_1 (0) - f_2 (0) \, |\ < \eta$. In fact, we can arrange for the layer between these graphs to lie to the outside of $\Sigma_t$. Arguing exactly as in Section~6 of \cite{AnderssonMetzger2009}, if $\eta > 0$ is sufficiently small, it is possible to glue in a neck to connect $\Sigma_t$ across this layer to obtain a surface with non-positive expansion and negative expansion around the neck. By flowing this surface outward at the speed of its expansion as in Lemma~5.2 of \cite{AnderssonMetzger2009}, one obtains a surface homologous to $\Sigma_t$ with everywhere negative expansion. This contradicts the fact that $\Sigma_0$ is weakly outermost. 

It is easy to see from the proof of Theorem~\ref{thm.local.foliation} that $\{\Sigma_t\}_{t \in [0, \delta]}$ is a foliation. Recall that $M$ and $\Sigma_\delta \cong \Sigma_0$ are connected. By the strong maximum principle as in e.g.~\cite[Proposition~3.1]{AshtekarGalloway} or \cite[Proposition~2.4]{AnderssonMetzger2009}, we have that $\Sigma_\delta = S$ if $\Sigma_\delta \cap S \neq \emptyset$. Note that the assumptions of the theorem continue to hold if we replace $\Sigma_0$ by $\Sigma_\delta$ and $M$ by the complement of $U$ in $M$. The result now follows by a continuity argument.
\end{proof}

\begin {example} 
The following example shows that there is still a fair amount of flexibility in the initial data sets covered by Theorem~\ref{thm.global.foliation}.  

Let $\mathbb{R}^3_1$ be Minkowski space with standard coordinates $t, x, y, z$. Consider the box  $\mathcal{B} = \{ (x,y,z) : 0 \le x \le 1,  0 \le y \le 1, 0 \le z \le 1\}$ in the $t = 0$ slice. Let $f : \mathcal{B} \to \R$ be a smooth function that vanishes near the boundary of $\mathcal{B}$ and whose graph is spacelike in $\mathbb{R}^3_1$. We identify opposite sides in the $x$ and the $y$ coordinate to obtain an initial data set $(M, g, K)$ with $M \cong \mathbb{T}^2 \times [0, 1]$. Let $\Sigma_0$ be the torus corresponding to $t = z = 0$. Note that $(M,g,K)$ satisfies the conditions of Theorem~\ref{thm.global.foliation}. The foliation in the conclusion of Theorem~\ref{thm.global.foliation} arises from intersecting the graph of $f$ with the null hypersurfaces $\mathcal{H}_c : t = z + c$. This can be understood using the following standard argument; see e.g.~\cite[Appendix A]{ChruArea}. The hypersurfaces $\mathcal{H}_c$  are totally geodesic null hypersurfaces, i.e.~each has vanishing null second fundamental form with respect to any null vector field $K_c$ tangent to
$\mathcal{H}_c$.  Since $K_c$ is orthogonal to every spacelike cross section, it follows that all these cross sections have vanishing null second fundamental form. In particular, they are MOTS.  
Moreover, again because $\mathcal{H}_c$ is totally geodesic, the induced metric on every spacelike cross section is invariant under the flow generated by $K_c$. It follows that any two such cross sections are isometric.
\end {example} 

\section{Proof of Theorem~\ref{thm.rigidity} and Further Consequences} \label{sec:further}

As in the proof of Lemma~\ref{lemma.weakly.outermost}, we will compute several quantities with respect to different spacetime second fundamental forms. We indicate by subscript the second fundamental form that is used. 

Theorem~\ref{thm.rigidity} follows from the local rigidity result below.

\begin{lemma} \label{lem.local.rigidity}
Assumptions as in Theorem~\ref{thm.rigidity}. Then $(\Sigma_0,g_0)$ is a flat torus, where $g_0$ is the metric on $\Sigma_0$ induced by $g$. Moreover, there is a neighborhood $U$ of $\Sigma_0$ in $M$ such that the following hold:
\begin{enumerate} [(i)]
\item $(U,g)$ is isometric to $([0,\delta)\times\Sigma_0,dt^2+e^{2 \, \epsilon \, t} \, g_0)$, for some $\delta>0$.
\item $K=(1-\epsilon) \, a \, dt^2-\epsilon \, g$ on $U$, where $a$ depends only on $t\in[0,\delta)$.
\item $\mu=0$ and $J=0$ on $U$.
\end{enumerate} 
\end{lemma}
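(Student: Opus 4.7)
The plan is to reduce to the setting of Theorem~\ref{thm.local.foliation} and Lemma~\ref{lemma.4.1}, and then use the $(n-1)$-convexity hypothesis to upgrade the resulting local MOTS foliation to a warped-product structure. First I would verify that $\Sigma_0$ satisfies the hypotheses of Lemma~\ref{lemma.weakly.outermost}. The only condition not directly assumed is $\theta_K^+ \ge 0$ on $S$; this follows by combining the $(n-1)$-convexity of $K + \epsilon g$, restricted to $T_p S$, which gives $\text{tr}_S K \ge -(n-1)\epsilon$, with $\theta_K^- \ge 2(n-1)\epsilon$ via the identity $\theta_K^+ = \theta_K^- + 2\,\text{tr}_S K$. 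Theorem~\ref{thm.local.foliation} then produces a foliation $\{\Sigma_t\}_{t \in [0, \delta)}$ of a neighborhood of $\Sigma_0$ by MOTS, each Ricci flat with $\chi_K^+ = 0$ and along which $\mu_K + J_K(\nu) = 0$. Lemma~\ref{lem.cohomology.hypothesis} together with Bochner's theorem, exactly as in Lemma~\ref{lemma.4.1}, force each $\Sigma_t$ to be a flat torus. The standard conformal-change plus integrated-stability argument at each leaf further forces the lapse of the foliation to be constant and $X_K$ to vanish on each leaf, so that the foliation is Fermi: $g = dt^2 + g_t$ on $U$, $\nu = \partial_t$ is everywhere an eigenvector of $K$, and the identity $\chi_K^+ = 0$ reads $\tfrac{1}{2}\partial_t g_t = A_{\Sigma_t} = -K|_{\Sigma_t}$.

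The central step is then to upgrade this to the umbilic relation $A_{\Sigma_t} = \epsilon\, g|_{\Sigma_t}$. Working in an orthonormal frame $\{\nu, e_1, \ldots, e_{n-1}\}$ that simultaneously diagonalises $K$ and $A_{\Sigma_t}$, set $\gamma = K(\nu,\nu)$ and let $\alpha_1, \ldots, \alpha_{n-1}$ denote the eigenvalues of $A_{\Sigma_t}$. Applying the $(n-1)$-convexity of $K + \epsilon g$ to the hyperplane $\text{span}\{e_1, \ldots, e_{n-1}\}$ and to the hyperplanes obtained by swapping $\nu$ in place of each $e_k$ yields
\[
H_{\Sigma_t} \le (n-1)\,\epsilon, \qquad \alpha_k + \gamma \ge H_{\Sigma_t} - (n-1)\,\epsilon \quad (k = 1, \ldots, n-1).
\]
Simultaneous saturation of these inequalities forces $\alpha_k = \epsilon$, $\gamma = -\epsilon$, $H_{\Sigma_t} = (n-1)\epsilon$, the desired rigid configuration. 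To promote the inequalities to equalities pointwise I would feed them into the constraints: using $\chi_K^+ = 0$, $X_K = 0$, $R_{\Sigma_t} = 0$ and the Gauss equation, the Hamiltonian constraint reduces to $\mu_K = \text{Ric}(\nu,\nu) - \gamma\, H_{\Sigma_t}$. Coupling this with the Riccati equation $\partial_t H_{\Sigma_t} = -|A_{\Sigma_t}|^2 - \text{Ric}(\nu,\nu)$, with $|A_{\Sigma_t}|^2 \ge H_{\Sigma_t}^2/(n-1)$, and with $\mu_K \ge 0$ (from DEC together with $\mu_K + J_K(\nu) = 0$) should yield a differential inequality for $H_{\Sigma_t}$ along the foliation whose only solution consistent with the pointwise bounds is the rigid one. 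I expect this rigidification to be the principal technical difficulty.

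Once $A_{\Sigma_t} = \epsilon\, g|_{\Sigma_t}$ is established, $\partial_t g_t = 2\epsilon\, g_t$ integrates to $g_t = e^{2\epsilon t}\, g_0$, yielding (i); this also confirms that $(\Sigma_0, g_0)$ is a flat torus. The tensor $K$ then satisfies $K|_{\Sigma_t} = -\epsilon\, g|_{\Sigma_t}$ and $K(\partial_t, X) = 0$ for every $X \in T\Sigma_t$, while the remaining component $K(\partial_t, \partial_t)$ is constrained by the momentum constraint to depend only on $t$, giving the form (ii). Finally, substituting the ansatz of (i)--(ii) into the Hamiltonian and momentum constraints gives $\mu_K = 0$ and $J_K = 0$, proving (iii).
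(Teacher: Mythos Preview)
There is a genuine gap. Your claim that ``the standard conformal-change plus integrated-stability argument at each leaf further forces the lapse of the foliation to be constant and $X_K$ to vanish on each leaf'' is not correct. What the first-variation identity \eqref{eq:thetader}, together with $Q_K=0$ on each leaf, actually yields after the usual rearrangement and integration is only
\[
X_K=\nabla\ln\phi
\]
on each $\Sigma_t$; it does \emph{not} force $X_K=0$ or $\phi$ constant. Your subsequent steps depend essentially on this: without $X_K=0$ the normal $\nu$ need not be an eigenvector of $K$, so you cannot simultaneously diagonalise $K$ and $A_{\Sigma_t}$ in a frame $\{\nu,e_1,\ldots,e_{n-1}\}$, the $(n-1)$-convexity inequalities obtained by ``swapping in $\nu$'' are no longer clean, and the Riccati equation for $H_{\Sigma_t}$ acquires a $-\Delta\phi$ term that obstructs the differential-inequality argument you sketch.

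The paper closes this gap by a different mechanism: it introduces the auxiliary initial data set $(M,g,P)$ with $P=-K-2\epsilon g$, checks (using $(n-1)$-convexity of $K+\epsilon g$) that it again satisfies the DEC and the boundary hypotheses of Lemma~\ref{lemma.weakly.outermost}, and applies Theorem~\ref{thm.local.foliation} a second time. A maximum-principle comparison shows the two MOTS foliations coincide. One then has both $X_K=\nabla\ln\phi$ and $X_P=\nabla\ln\phi$; since $X_P=-X_K$, this forces $X_K=0$ and $\phi$ constant. Moreover, from the second foliation one gets $\chi_P^+=0$ in addition to $\chi_K^+=0$, and adding these two relations gives $A=\epsilon\,g|_{\Sigma_t}$ directly, with no Riccati analysis needed. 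The missing idea in your proposal is precisely this use of the dual data set $P$; the $(n-1)$-convexity enters not through pointwise eigenvalue inequalities on a single leaf, but through verifying DEC for $(M,g,P)$ and the mean-curvature bounds needed for the maximum principle.
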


\begin{proof}

By assumption, 
\[
\theta_K^-=H-\tr_{S} \, K\ge2 \, (n-1) \, \epsilon
\]
on $S$, where $H$ is the mean curvature of $S$ with respect to the normal pointing out of $M$. Using also the assumption that $K+\epsilon \, g$ is $(n-1)$-convex, we obtain
\[
H\ge\tr_S \, K+2 \, (n-1) \, \epsilon = \tr_S \, K + (n-1) \, \epsilon + (n- 1) \, \epsilon \ge (n-1)\, \epsilon.
\]
Therefore,
\[
\theta_K^+=H+\tr_SK \geq (n - 1) \, \epsilon + \tr_S K \ge0
\]
on $S$. It follows from Lemma~\ref{lemma.weakly.outermost} that $\Sigma_0$ is a weakly outermost MOTS in $(M,g,K)$. By Theorem~\ref{thm.local.foliation}, $\Sigma_0$ has a neighborhood $U_1 \cong [0,\delta_1)\times\Sigma_0$ in $M$ such that the following hold: 
\begin{enumerate} [--]
\item We have that 
\[
g = \phi_1^2 \, ds^2+g_1(s)
\]
on $U_1$, where $g_1(s)$ is the metric on $\Sigma_1(s)\cong\{s\}\times\Sigma_0$ induced by $g$.
\item Every leaf $\Sigma_1(s)$ is a MOTS. In fact, 
\[
0 = \chi_K^+(s)= A_1(s) + K|_{\Sigma_1(s)},
\]
where $A_1(s)$ is the second fundamental form of $\Sigma_1(s)$ in $M$ computed with respect to the unit normal $\nu_1(s)$ in direction of the foliation.
\item We have that 
\[
\mu_K+J_K(\nu_1(s))=0.
\]
\end{enumerate}
Consider now the initial data set $(M,g,P)$, where 
\[
P=-K-2 \, \epsilon \, g.
\]
Note that $(M,g,P)$ satisfies the DEC. In fact, 
\[
\mu_P-|J_P|=\mu_K-|J_K|+2 \, (n-1) \, (\tr \, K+n \, \epsilon) \, \epsilon \ge 0
\]
where we have used the assumption that $K+\epsilon \, g$ is $(n-1)$-convex.

By assumption, $\theta_K^+=H+\tr_{\Sigma_0}K\le0$ on $\Sigma_0$, where $H$ is the mean curvature of $\Sigma_0$ with respect to the normal that points into $M$. It follows that
\[
H\le-\tr_{\Sigma_0}K\le(n-1) \, \epsilon
\]
on $\Sigma_0$ and thus 
\[
\theta_P^+=H+\tr_{\Sigma_0}P=H-\tr_{\Sigma_0}K-2 \, (n-1) \, \epsilon\le0.
\]
Also, 
\[
\theta^+_P=\theta_K^{-}-2 \, \epsilon \, (n-1)\ge0
\]
on $S$. By Lemma~\ref{lemma.weakly.outermost}, $\Sigma_0$ is a weakly outermost MOTS in $(M,g,P)$. It follows from Theorem~\ref{thm.local.foliation} that there is a neighborhood $U_2$ of $\Sigma_0$ in $M$ diffeomorphic to $[0,\delta_2)\times\Sigma_0$ for some $\delta_2>0$, such that the following hold: 
\begin{enumerate} [--]
\item We have that 
\[
g = \phi_2^2 \, dt^2+g_2(t)
\]
on $U_2$, where $g_2(t)$ is the metric on $\Sigma_2(t)\cong\{t\}\times\Sigma_0$ induced by $g$. 
\item Every leaf $\Sigma_2(t)$ is a MOTS. In fact, 
\[
0 = \chi_P^+(t)= A_2(t) + P|_{\Sigma_2(t)},
\]
where $A_2(t)$ is the second fundamental form of $\Sigma_2(t)$ in $M$ computed with respect to the unit normal $\nu_2(t)$ in direction of the foliation.
\item $(\Sigma_2(t),g_2(t))$ is Ricci flat.
\item We have that 
\[
\mu_P+J_P(\nu_2(t))=0.
\]
\end{enumerate}
Decreasing $\delta_2>0$, if necessary, we may assume that $U_2\subset U_1$. Fix $t \in (0, \delta_2)$ and note that $\Sigma_1(s)\cap\Sigma_2(t)\neq\emptyset$ for some $s\in(0,\delta_1)$, since $\Sigma_2(t)\subset U_2\subset U_1$ and $\Sigma_1(0)\cap\Sigma_2(t)=\Sigma_0\cap\Sigma_2(t)=\emptyset$. Let
\[
s_0=s_0(t)=\inf\{s\in(0,\delta_1) :\Sigma_1(s)\cap\Sigma_2(t)\neq\emptyset\}
\]
and note that $\Sigma_1(s_0)\cap\Sigma_2(t)\neq\emptyset$. In particular, $s_0>0$. Also, $\Sigma_1(s)\cap\Sigma_2(t)=\emptyset$ for all $s\in[0,s_0)$. This means that $\Sigma_2(t)$ is contained in the region outside of $\Sigma_1(s_0)$.

The mean curvature of $\Sigma_1(s_0)$ is given by 
\[
H_1(s_0)=\tr\chi_K^+(s_0)-\tr_{\Sigma_1(s_0)}K=-\tr_{\Sigma_1(s_0)}K\le (n-1) \, \epsilon.
\]
For the mean curvature of $\Sigma_2(t)$, we have the estimate
\[
H_2(t)=\tr\chi_P^+(t)-\tr_{\Sigma_2(t)}P=\tr_{\Sigma_2(t)}K+2 \, (n-1) \, \epsilon\ge (n-1) \, \epsilon.
\]
In particular, 
\[
H_1(s_0)\le H_2(t)
\]
so that 
\[
\Sigma_1(s_0)=\Sigma_2(t)
\]
by the maximum principle. 

We see that the foliations $\{\Sigma_1 (s)\}_{s \in [0, \delta_1)}$ and $\{\Sigma_2 (t)\}_{t \in [0, \delta_2)}$ are the same after reparametrization. Below, we will denote this foliation of a neighborhood $U$ of $\Sigma_0$ in $M$ by $\{\Sigma (t)\}_{t \in [0, \delta)}$. Note that $\chi_K^+=0$ and $\chi_P^+=0$ on each leaf $\Sigma (t)$. Let $\nu (t)$ be the unit normal of $\Sigma (t)$ in direction of the foliation, $g(t)$ the induced metric, $A(t)$ the second fundamental form with respect to $\nu (t)$, and $\phi$ the lapse function of the foliation. By \eqref{eq:thetader}, we have that 
\[
\ds 0=\frac{\p\theta_K^+}{\p t}=-\Delta\phi+2\langle X_K,\nabla\phi\rangle+ \left(\divv X_K-|X_K|^2 \right) \, \phi
\]
where
\[
\ds Q_K =\frac{1}{2} \, R_{\Sigma (t)}-\left(\mu_K+J_K(\nu (t))\right)-\frac{1}{2} \, |\chi_K^+|^2
\]
vanishes. Arranging terms as in \cite[(2.9)]{GS}, we obtain that
\[
\divv(X_K-\nabla\ln\phi)-|X_K-\nabla\ln\phi|^2=0.
\]
Integrating both sides of this equation over $\Sigma (t)$ and applying the divergence theorem, we obtain that 
\[
X_K=\nabla\ln\phi
\]
on $\Sigma (t)$. By the same argument, we find 
\[
X_P=\nabla\ln\phi.
\] 

From the definition of $P$,
\[
X_P=-X_K \qquad \qquad  \mu_P=\mu_K+2 \, (n-1) \, (\tr \, K+n \, \epsilon) \, \epsilon \qquad \qquad J_P=-J_K.
\]
Thus, 
\[
\mu_K=-J_K(\nu (t))=J_P(\nu (t))=-\mu_P=-\mu_K-2 \, (n-1) \, (\tr \, K+n \, \epsilon) \, \epsilon
\]
and 
\[
\nabla\ln\phi=X_K=-X_P=-\nabla\ln\phi.
\] 
It follows that 
\[
|J_K| \le\mu_K = -(n-1) \, (\tr \, K+n \, \epsilon) \, \epsilon \le 0 \qquad \text{ and } \qquad \nabla\ln\phi=0.
\] 
From this, we conclude that 
\[
\mu_K = 0 \qquad \qquad \qquad J_K = 0 \qquad \qquad \qquad (\tr \, K+n \, \epsilon) \, \epsilon=0
\]
on $U$. Moreover, the lapse function $\phi$ is constant on $\Sigma (t)$ for every $t\in[0,\delta)$. 

Using that 
\[
0 = \chi_K^+=A(t) + K|_{\Sigma (t)} \qquad \text{ and } \qquad 0 = \chi_{P}^+=A (t)-K|_{\Sigma(t)}-2 \, \epsilon \, g (t),
\]
we obtain 
\[
A (t)=\epsilon \, g (t)=-K|_{\Sigma (t)}
\]
and thus 
\[
g=dt^2+e^{2 \, \epsilon  \, t} \, g_0.
\]
Using also that $K(\nu (t), \, \cdot \, )|_{\Sigma (t)}=0$ since $X_K=0$, we see that
\[
K=a \, dt^2-\epsilon \, g (t)
\]
on $U$. If $\epsilon=1$, we have $\tr \, K=-n$. Thus $a = -1$ and hence $K=-g$ on $U$. If $\epsilon=0$, we use that $d(\tr \,  K)=\divv K$ (since $J_K=0$)  to see that $a$ is constant on every leaf $\Sigma (t)$. Finally, the same argument as in the proof of Lemma~\ref{lemma.4.1} shows that $(\Sigma_0,g_0)$ is a flat torus. This completes the proof.
\end{proof}

\begin{proof} [Proof of Theorem~\ref{thm.rigidity}] 
Let 
\[
\ell = \sup \{ \delta : \text{the conclusion of Lemma~\ref{lem.local.rigidity} holds with this value of $\delta >0$}\}.
\]
Note that $\ell < \infty$ since $M$ is compact. Reasoning the embeddedness of the final sheet as in the proof of Theorem~\ref{thm.global.foliation}, we see that $(M, g)$ is isometric to the warped product $([0,\ell] \times\Sigma_0,dt^2+e^{2 \, \epsilon \, t} \, g_0)$. Moreover, we see that (iii) and (iv) of Theorem~\ref{thm.rigidity} hold.
\end{proof}

\begin{proof}[Proof of Corollary~\ref{cor.riemannian.case}]
Let $K=-\epsilon \, g$ and note that $(M,g,K)$ satisfies the DEC. In fact, a straightforward calculation gives that
\[
\ds\mu_K=\frac{1}{2} \, \left( R+n \, (n-1) \, \epsilon \right)\ge0 \quad \mbox{ and } \quad J_K=0.
\]
The expansion $\theta_K^+$ of $\Sigma_0$ in $(M,g,K)$ computed with respect to the normal that points into $M$ satisfies 
\[
\theta_K^+=H+\tr_{\Sigma_0}K=H- (n-1) \, \epsilon\le0.
\]
The expansion $\theta_K^-$ of $S$ in $(M,g,K)$ computed with respect to the normal that points out of  $M$ satisfies
\[
\theta_K^-=H-\tr_{S} K=H+(n-1) \, \epsilon\ge 2 \, (n-1) \, \epsilon.
\]
Thus Theorem~\ref{thm.rigidity} applies to $(M, g, K)$ and gives the assertion.
\end{proof}

The next theorem establishes a rigidity result under the boundary conditions of Theorem~\ref{thm.global.foliation}, assuming a volume minimizing condition on $\Sigma_0$.

\begin{theorem}\label{thm.volume.minimizing}
Let $(M,g,K)$ be an $n$-dimensional, $3\le n\le 7$,  compact-with-boundary initial data set. Suppose that $(M,g,K)$ satisfies the DEC, $\mu\ge|J|$. Suppose also that the boundary can be expressed as a disjoint union $\p M=\Sigma_0\cup S$ of non-empty unions of components such that the following conditions hold:
\begin{enumerate}
\item $\theta^+\le0$ on $\Sigma_0$ with respect to the normal that points into $M$.
\item $\theta^+\ge0$ on $S$ with respect to the normal that points out of $M$.
\item $M$ satisfies the homotopy condition with respect to $\Sigma_0$.
\item $\Sigma_0$ satisfies the cohomology condition.
\item $K$ is $(n-1)$-convex.
\item $\Sigma_0$ is volume minimizing in $(M,g)$.
\end{enumerate}
Then, the following hold:
\begin{enumerate}
\item[\rm (i)] $(\Sigma_0,g_0)$ is a flat torus, where $g_0$ is the metric on $\Sigma_0$ induced by $g$.
\item[\rm (ii)] $(M,g)$ is isometric to $([0,\ell]\times\Sigma_0,dt^2+g_0)$, for some $\ell>0$.
\item[\rm (iii)] $K=a \, dt^2$ on $M$, where $a$ depends only on $t\in[0,\ell]$.
\item[\rm (iv)] $\mu=0$ and $J=0$ on $M$.
\end{enumerate}
\end{theorem}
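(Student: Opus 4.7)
The approach is to bootstrap Theorem \ref{thm.global.foliation} by using the additional $(n-1)$-convexity and volume-minimizing hypotheses to upgrade the conclusions to the full rigidity statement. Hypotheses (1)--(4) are exactly those of Theorem \ref{thm.global.foliation}, so we immediately obtain a diffeomorphism $M\cong[0,\ell]\times\Sigma_0$ with a foliation $\{\Sigma_t\}$ by flat torus MOTS satisfying $\chi_K^+=K|_{\Sigma_t}+A_t=0$ and $\mu_K=|J_K|=-J_K(\nu_t)$ on each leaf. Conclusion (i) follows immediately.

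Next, combine the $(n-1)$-convexity of $K$ with the volume-minimizing hypothesis. Since $\chi_K^+=0$ gives $H_t=-\tr_{\Sigma_t}K$ and $(n-1)$-convexity yields $\tr_{\Sigma_t}K\ge0$, we have $H_t\le0$. The first variation formula then shows $\vol(\Sigma_t)\le\vol(\Sigma_0)$, while the volume-minimizing property of $\Sigma_0$ (which transfers to each $\Sigma_t$ since $\Sigma_t$ is homologous to $\Sigma_0$) yields the reverse inequality. Hence $\vol(\Sigma_t)\equiv\vol(\Sigma_0)$, so $H_t\equiv0$ and $\tr_{\Sigma_t}K=0$ on every leaf. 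Equality in the chain $0=\tr_{\Sigma_t}K\ge\lambda_1+\cdots+\lambda_{n-1}\ge0$ then forces $\nu_t$ to be an eigenvector of $K$ for its largest eigenvalue, and therefore $K(\nu_t,v)=0$ for every $v\in T\Sigma_t$, i.e.\ $X_K=0$ along each leaf.

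The key step is to invoke stability. Since each $\Sigma_t$ is a volume-minimizing minimal hypersurface, the standard stability inequality with test function $f\equiv1$ gives $\int_{\Sigma_t}(\Ric(\nu_t,\nu_t)+|A_t|^2)\le0$. Using the Gauss equation together with $R_{\Sigma_t}=0$ and $H_t=0$ gives $\Ric(\nu_t,\nu_t)=\frac{1}{2}(R-|A_t|^2)$. On the other hand, since $K(\nu_t,\cdot)|_{\Sigma_t}=0$ and $\tr_{\Sigma_t}K=0$ imply $\tr K=K(\nu_t,\nu_t)$ and $|K|^2-(\tr K)^2=|A_t|^2$, the definition of $\mu_K$ yields $R=2\mu_K+|A_t|^2$, so $\Ric(\nu_t,\nu_t)=\mu_K$. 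The stability inequality thus reduces to $\int_{\Sigma_t}(\mu_K+|A_t|^2)\le0$, which combined with $\mu_K\ge0$ from the DEC forces $\mu_K\equiv0$ and $A_t\equiv0$. This yields $J_K=0$ on $M$ and $K|_{T\Sigma_t}=0$, giving conclusion (iv).

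To extract the product structure, apply the MOTS stability formula \eqref{eq:thetader} with the now-vanishing quantities ($X_K=0$, $\chi_K^+=0$, $\mu_K+J_K(\nu_t)=0$, $R_{\Sigma_t}=0$) to conclude $\Delta\phi=0$ on each $\Sigma_t$, so the lapse is constant on each leaf. Reparametrizing $t$, we may assume $\phi=1$, so $g=dt^2+g_t$, and $A_t=0$ gives $\partial_t g_t=0$, hence $g_t\equiv g_0$; this proves (ii). Finally, since $K|_{T\Sigma_t}=0$ and $K(\nu_t,\cdot)|_{\Sigma_t}=0$, only $K(\partial_t,\partial_t)=a$ survives, so $K=a\,dt^2$; then $J_K=0$ in the product metric yields $\partial_i a=0$, establishing (iii). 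The main delicate point is the clean cancellation in the stability computation, which crucially relies on the identity $R=2\mu_K+|A_t|^2$ that itself requires the $X_K=0$ conclusion extracted from the $(n-1)$-convexity analysis.
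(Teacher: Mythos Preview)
Your argument is correct, and it follows a genuinely different path from the paper's. The paper works locally via Theorem~\ref{thm.local.foliation} and, after obtaining $\theta_K^+=\theta_K^-=0$ on each leaf, applies the first variation formula \eqref{eq:thetader} \emph{twice}: once for $\theta_K^+$ (yielding $X_K=\nabla\ln\phi$) and once for $\theta_K^-=\theta_{-K}^+$ (which, since $Q_{-K}=-2\mu_K-2|A|^2\le 0$, forces $X_{-K}=\nabla\ln\phi$ together with $\mu_K=0$ and $A=0$ upon integration). From $X_K=-X_{-K}$ one then reads off $X_K=0$ and $\phi$ constant on leaves, and the remaining conclusions follow as in Lemma~\ref{lem.local.rigidity}, with a separate local-to-global continuation at the end.

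By contrast, you invoke the global Theorem~\ref{thm.global.foliation} at the outset, which eliminates the continuation step entirely. You then obtain $X_K=0$ by a pointwise linear-algebra argument---equality in $\tr_{\Sigma_t}K\ge\lambda_1+\cdots+\lambda_{n-1}\ge 0$ forces $\nu_t$ into the top eigenspace of $K$---and deduce $A_t=0$ and $\mu_K=0$ from the \emph{Riemannian} stability inequality for the volume-minimizing minimal leaves, after the neat identification $\Ric(\nu_t,\nu_t)=\mu_K$. Your route is arguably more elementary and more geometric (classical minimal-surface stability plus a rigidity-of-eigenvalues observation), while the paper's stays uniformly within the MOTS first-variation framework and reuses the machinery already set up for Lemma~\ref{lem.local.rigidity}. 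Both arrive at the same endpoint; the trade-off is between self-containment within the MOTS calculus (paper) and leveraging standard Riemannian tools once the leaves are seen to be minimal (yours).
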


As shown in the following example, Theorem~\ref{thm.volume.minimizing} fails to hold if one drops either the 
$(n-1)$-convexity assumption or the volume minimizing assumption.

\begin {example}
Let $(\Sigma_0,g_0)$ be the square flat $(n-1)$-torus. Let $(M,g)$ be the cylinder $([0,\ell]\times\Sigma_0,dt^2+e^{2 \, \epsilon \, t} \, g_0)$ and $K=-\epsilon \, g$, where $\epsilon=-1$ or $\epsilon=1$. The second fundamental form of $\Sigma_t=\{t\}\times\Sigma_0$ in $(M,g)$ with respect to the  normal in direction of increasing values of $t$ is given by $A(t)=\epsilon \, e^{2\, \epsilon \,  t} \, g_0$. Then $(M,g,K)$ satisfies all the assumptions of Theorem~\ref{thm.volume.minimizing} except for the volume minimizing assumption in the case where $\epsilon=-1$ and the $(n-1)$-convexity assumption in the case where  $\epsilon=1$.
\end {example}

\begin{proof}[Proof of Theorem~\ref{thm.volume.minimizing}]
It follows from Lemma~\ref{lemma.weakly.outermost} that $\Sigma_0$ is weakly outermost. Then, by Theorem~\ref{thm.local.foliation}, there exists a neighborhood $U$ of $\Sigma_0$ in $M$ diffeomorphic to $[0,\delta)\times\Sigma_0$ for some $\delta>0$, such that:
\begin{enumerate} [--]
\item We have that 
\[
g=\phi^2 \, dt^2+g(t)
\]
on $U$, where $g(t)$ is the metric on $\Sigma(t)\cong\{t\}\times\Sigma_0$ induced by $g$.
\item Every leaf $\Sigma(t)$ is a MOTS. In fact, 
\[
\chi_K^+(t)=A(t) + K|_{\Sigma(t)}=0,
\]
where $A(t)$ is the second fundamental form of $\Sigma(t)$ in $(M,g)$.
\item $(\Sigma(t),g(t))$ is Ricci flat.
\item We have that
\[
\mu_K+J_K(\nu(t))=0,
\]
where $\nu(t)$ is the unit normal field on $\Sigma(t)$ in direction of increasing values of $t$.
\end{enumerate}
Since $\tr_{\Sigma(t)}K\ge0$, we have 
\[
H(t)\le H(t)+\tr_{\Sigma(t)}K=\tr\chi_K^+(t)=0
\]
where $H(t)$ is the mean curvature of $\Sigma(t)$ in $(M,g)$. The first variation formula for the volume of $(\Sigma(t),g(t))$ gives
\[
\ds\frac{d}{dt}\Vol(\Sigma(t),g(t))=\int_{\Sigma(t)}\phi \, H(t) \, d\vol_{g(t)}\le0.
\] 
In particular,
\begin{align} \label{eq.first.variation.volume}
\Vol(\Sigma(t),g(t))\le\Vol(\Sigma_0,g_0)
\end{align}
for every $t \in [0, \delta)$. Since $\Sigma_0$ is volume minimizing by assumption, we obtain
\[
\Vol(\Sigma(t),g(t))=\Vol(\Sigma_0,g_0)
\]
for all $t\in[0,\delta)$. Then, by \eqref{eq.first.variation.volume}, we have $H(t)=0$, which implies $\tr_{\Sigma(t)}K=0$, for each $t\in[0,\delta)$. Therefore $\theta_K^+=\theta_K^-=0$ on $\Sigma(t)$, for each $t\in[0,\delta)$.

As in the proof of Lemma~\ref{lem.local.rigidity}, the first variation of $\theta_K^+$ gives that $X_K=\nabla\ln\phi$ 
on $\Sigma(t)$. On the other hand, the first variation of $\theta_K^-=\theta_{-K}^+$ gives that $X_{-K}=\nabla\ln\phi$ 
on $\Sigma_t$. Proceeding as in the proof of Lemma~\ref{lem.local.rigidity}, we obtain the following \textsl{local rigidity}:
\begin{enumerate} [--]
\item $(\Sigma_0,g_0)$ is a flat torus, where $g_0$ is the metric on $\Sigma_0$ induced by $g$.
\item $(U,g)$ is isometric to $([0,\delta)\times\Sigma_0, dt^2+g_0)$.
\item $K=a\, dt^2$ on $U$, where $a$ depends only on $t\in[0,\delta)$.
\item $\mu_K=0$ and $J_K=0$ on $U$.
\end{enumerate} 
Observe that $\Sigma(t)$ is also volume minimizing in $(M,g)$. The assertion follows from this local rigidity as in the proof of Theorem~\ref{thm.rigidity}.
\end{proof}

\section{Embedding of the initial data into a quotient of Minkowski space} \label{sec:Minkowski}

In this section we show how, under the assumptions of Theorem~\ref{thm.rigidity}, $(M,g)$ can be isometrically embedded into a quotient of the Minkowski spacetime in a such way that $K$ is exactly its second fundamental form. This, together with Theorem~\ref{thm.rigidity}, characterizes the geometry -- both intrinsic and extrinsic -- of the initial data set $(M,g,K)$ under natural conditions. The same holds under the assumptions of Theorem~\ref{thm.volume.minimizing}.

\begin{theorem}\label{thm.embedding}
Assumptions as in Theorem~\ref{thm.rigidity} or Theorem~\ref{thm.volume.minimizing}. 
There is an isometric embedding of $(M,g)$ into a quotient of Minkowski space in a such way that $K$ is its second fundamental form.
\end{theorem}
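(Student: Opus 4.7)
The plan is to use the explicit warped-product normal form for $(g,K)$ supplied by Theorems \ref{thm.rigidity} and \ref{thm.volume.minimizing} to write down an isometric embedding of the universal cover $[0,\ell]\times\R^{n-1}$ of $(M,g)$ into Minkowski space $\L^{n+1}$ by hand, and then to verify that the deck-transformation lattice $\Lambda$ (with $\Sigma_0\cong\R^{n-1}/\Lambda$) lifts to a group of global isometries of $\L^{n+1}$. The embedding will then descend to $(M,g)\hookrightarrow\L^{n+1}/\Lambda$, realizing $K$ as its second fundamental form. Two separate constructions handle the cases $\epsilon=0$ and $\epsilon=1$.

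In the case $\epsilon=0$ (covering Theorem \ref{thm.volume.minimizing} and half of Theorem \ref{thm.rigidity}), the metric is $g=dt^2+g_0$ and $K=a(t)\,dt^2$. On $\L^{n+1}$ with coordinates $(\tau,\sigma,x)$ and metric $-d\tau^2+d\sigma^2+|dx|^2$, I will take $F(t,x)=(\tau(t),\sigma(t),x)$, where $(\tau(t),\sigma(t))$ is a spacelike unit-speed curve in $\L^2$ parametrized by $\dot\sigma=\cosh\theta(t)$ and $\dot\tau=\sinh\theta(t)$. The induced metric is automatically $dt^2+|dx|^2$, and a direct computation (in the convention that is implicit in the paper, for which $K=-g$ on the hyperboloid) shows that the second fundamental form with respect to the future timelike unit normal $u=(\dot\sigma,\dot\tau,0,\ldots,0)$ equals $-\dot\theta(t)\,dt^2$. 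Setting $\theta(t)=-\int_0^t a(s)\,ds$ then yields the required $K$. Translations $x\mapsto x+\lambda$ with $\lambda\in\Lambda$ are manifestly Minkowski isometries preserving the image of $F$, so $F$ descends to an isometric embedding into $\L^{n+1}/\Lambda$.

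In the case $\epsilon=1$ of Theorem \ref{thm.rigidity}, the metric is $g=dt^2+e^{2t}g_0$ and $K=-g$, so $(M,g)$ is a region of hyperbolic space in horospherical coordinates and $K$ is the standard second fundamental form of the upper hyperboloid $\Hy^n\subset\L^{n+1}$. Using null coordinates $(u,y,v)$ on $\L^{n+1}$ with metric $-du\,dv+|dy|^2$, the horospherical parametrization
\[
F(t,x)=\bigl(e^{t},\;e^{t}x,\;e^{-t}+e^{t}|x|^2\bigr)
\]
lands in $\{uv-|y|^2=1,\;u>0\}$ and pulls back the Minkowski metric to $dt^2+e^{2t}|dx|^2$; the position vector is the future unit normal, giving $K=-g$. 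The horospherical translation $x\mapsto x+\lambda$ extends to the Minkowski parabolic transformation
\[
T_\lambda:(u,y,v)\longmapsto\bigl(u,\;y+u\lambda,\;v+2\lambda\cdot y+u|\lambda|^2\bigr),
\]
and a short direct check shows that $T_\lambda$ preserves $-du\,dv+|dy|^2$. The quotient of $\L^{n+1}$ by the group generated by $\{T_\lambda:\lambda\in\Lambda\}$ is then the required Minkowski quotient, and $F$ descends.

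The main point requiring care is precisely the $\epsilon=1$ case: one must confirm that the horospherical translations on $\Sigma_0$ lift to isometries of the ambient Minkowski space rather than merely of the hyperboloid. The explicit formula for $T_\lambda$ above furnishes this lift, exhibiting the Euclidean translation group of a horosphere as a subgroup of the parabolic stabilizer of a null direction inside $O(1,n)$. Combined with the elementary $\epsilon=0$ construction, this produces in each case the desired isometric embedding of $(M,g)$ into a quotient of Minkowski space that induces $K$ as its second fundamental form.
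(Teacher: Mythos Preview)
Your proof is correct and follows essentially the same two-case strategy as the paper: for $\epsilon=0$ you embed as a cylinder over a unit-speed spacelike curve in $\L^2$ (your hyperbolic-angle parametrization $(\sinh\theta,\cosh\theta)$ is equivalent to the paper's $(t(s),r(s))$ with $r'=\sqrt{1+t'^2}$, and your $-\dot\theta$ matches their $b=t''/\sqrt{1+t'^2}$), while for $\epsilon=1$ you use the hyperboloid model, just as the paper does. The one substantive addition on your side is the explicit verification, via the parabolic transformations $T_\lambda$ in null coordinates, that the horospherical lattice translations extend to ambient Minkowski isometries; the paper simply asserts the identification of $(N_1,h_1)$ with the hyperboloid and leaves this point implicit, so your treatment of the $\epsilon=1$ quotient is more complete.
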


Consider the Minkowski spacetime $\R_1^n$ of dimension $n+1$, i.e.~$\R\times\R\times\R^{n-1}$ with the Lorentzian metric 
\[
g_M=-dt^2+dr^2+d\x^2
\]
where $d\x^2$ is the standard Euclidean metric on $\R^{n-1}$. 

Given a smooth function $t:\R\to\R$, define $r:\R\to\R$ by
\[
\ds r(s)=\int_0^s\sqrt{1+t'(\sigma)^2} \, d\sigma.
\]
Consider the spacelike hypersurface
\[
N_0=\{(t(s), \, r(s), \, x) : s\in\R, \, x\in\R^{n-1}\}\subset\R_1^n.
\]
Note that $(N_0,h_0)$ is isometric to $(\R\times\R^{n-1},ds^2+d\x^2)$ where $h_0$ is the metric induced by $g_M$. Straightforward calculations show that the second fundamental form of $N$ in $\R_1^n$ with respect to $\frac{\p}{\p s}$ is given by 
$P=b \, ds^2$, where $b:\R\to\R$ is the function 
\[
\ds b=\frac{t''}{\sqrt{1+t'{}^2}}.
\]

Now, consider the hyperbolic space $\Hy^n$ of dimension $n$, that is, the $n$-manifold $\R_+\times\R^{n-1}$ endowed with the metric 
\[
\ds g_H=\frac{1}{x_0^2}\, (dx_0^2+d\x^2).
\] 
Using the change of variables $x_0$ to $s=- \ln x_0$, we may write 
\[
g_H=ds^2+e^{2 \, s} \, d\x^2.
\]
Thus $(N_1,h_1)=(\R\times\R^{n-1},ds^2+e^{2 \, s} \, d\x^2)$ is isometric to hyperbolic space.
 
Consider the hypersurface 
\[
H=\{(t, \, r, \, x)\in\R\times\R\times\R^{n-1}:-t^2+r^2+|x|^2=-1, \, t<0\},
\]
where $|x|^2=x_1^2+\cdots+x_{n-1}^2$. Recall that $H$ with the metric induced by $g_M$ is isometric to $\Hy^n$ and that the second fundamental form of $H$ in $\R_1^n$ with respect to the future directed unit normal is given by $-g_M|_{H}$. 

\begin{proof}[Proof of Theorem~\ref{thm.embedding}]
From Theorem~\ref{thm.rigidity} and Theorem~\ref{thm.volume.minimizing}, we know that $(M,g)$ is isometric to $([0,\ell]\times\Sigma_0,ds^2+e^{2 \, \epsilon\, t} \, g_0)$, where $(\Sigma_0,g_0)$ is a flat torus and $K=(1-\epsilon) \, a \, dt^2-\epsilon g$ for some function $a:[0,\ell]\to\R$. 
Therefore, $(M,g)$ is isometric to a quotient of $([0,\ell]\times\R^{n-1},ds^2+e^{2 \, \epsilon \, t} \, d\x^2)$. In the case where $\epsilon=0$, we can take $t:[0,\ell]\to\R$ to be the solution of
\[
\ds\frac{t''}{\sqrt{1+t'{}^2}}=a
\]
with initial condition $t(0) = 0$ and $t' (0) = 0$. 
Therefore, identifying $(N_0,h_0)$ with $(\R\times\R^{n-1},ds^2+d\x^2)$, it follows from the above remarks that we can embed $(M,g)$ into a quotient of $\R_1^n$ in a such way that the second fundamental form of $M$ is given by $P=a \, ds^2=K$. In the case where $\epsilon=1$, it suffices to identify $(N_1,h_1)$ with $(H,h)$, where $h$ is the metric on $H$ induced by $g_M$.
\end{proof}

\bibliographystyle{amsplain}
\bibliography{bibliography}

\providecommand{\bysame}{\leavevmode\hbox to3em{\hrulefill}\thinspace}
\providecommand{\MR}{\relax\ifhmode\unskip\space\fi MR }
\providecommand{\MRhref}[2]{%
  \href{http://www.ams.org/mathscinet-getitem?mr=#1}{#2}
}
\providecommand{\href}[2]{#2}
\begin{thebibliography}{10}

\bibitem{ALY}
Aghil Alaee, Martin Lesourd, and Shing-Tung Yau, \emph{Stable {S}urfaces and
  {F}ree {B}oundary {M}arginally {O}uter {T}rapped {S}urfaces}, preprint,
  \url{https://arxiv.org/abs/2009.07933} (2020).

\bibitem{ACG}
Lars Andersson, Ming-Liang Cai, and Gregory~J. Galloway, \emph{Rigidity and
  positivity of mass for asymptotically hyperbolic manifolds}, Ann. Henri
  Poincar\'e \textbf{9} (2008), no.~1, 1--33.

\bibitem{AEM}
Lars Andersson, Michael Eichmair, and Jan Metzger, \emph{Jang's equation and
  its applications to marginally trapped surfaces}, Complex analysis and
  dynamical systems {IV}. {P}art 2, Contemp. Math., vol. 554, Amer. Math. Soc.,
  Providence, RI, 2011, pp.~13--45. \MR{2884392}

\bibitem{AnderssonMarsSimon}
Lars Andersson, Marc Mars, and Walter Simon, \emph{Stability of marginally
  outer trapped surfaces and existence of marginally outer trapped tubes}, Adv.
  Theor. Math. Phys. \textbf{12} (2008), no.~4, 853--888. \MR{2420905}

\bibitem{AnderssonMetzger2009}
Lars Andersson and Jan Metzger, \emph{The area of horizons and the trapped
  region}, Comm. Math. Phys. \textbf{290} (2009), no.~3, 941--972. \MR{2525646}

\bibitem{AshtekarGalloway}
Abhay Ashtekar and Gregory~J. Galloway, \emph{Some uniqueness results for
  dynamical horizons}, Adv. Theor. Math. Phys. \textbf{9} (2005), no.~1, 1--30.
  \MR{2193368}

\bibitem{ChruArea}
Piotr~T. Chru\'{s}ciel, Erwann Delay, Gregory~J. Galloway, and Ralph Howard,
  \emph{Regularity of horizons and the area theorem}, Ann. Henri Poincar\'{e}
  \textbf{2} (2001), no.~1, 109--178. \MR{1823836}

\bibitem{Croke}
Christopher~B. Croke and Bruce Kleiner, \emph{A warped product splitting
  theorem}, Duke Math. J. \textbf{67} (1992), no.~3, 571--574. \MR{1181314}

\bibitem{Eichmair2009}
Michael Eichmair, \emph{The {P}lateau problem for marginally outer trapped
  surfaces}, J. Differential Geom. \textbf{83} (2009), no.~3, 551--583.
  \MR{2581357}

\bibitem{Eichmair2010}
\bysame, \emph{Existence, regularity, and properties of generalized apparent
  horizons}, Comm. Math. Phys. \textbf{294} (2010), no.~3, 745--760.
  \MR{2585986}

\bibitem{EichmairHuangLeeSchoen}
Michael Eichmair, Lan-Hsuan Huang, Dan~A. Lee, and Richard Schoen, \emph{The
  spacetime positive mass theorem in dimensions less than eight}, J. Eur. Math.
  Soc. (JEMS) \textbf{18} (2016), no.~1, 83--121. \MR{3438380}

\bibitem{potpourri}
Michael Eichmair and Jan Metzger, \emph{Jenkins-{S}errin-type results for the
  {J}ang equation}, J. Differential Geom. \textbf{102} (2016), no.~2, 207--242.
  \MR{3454546}

\bibitem{Galloway}
Gregory~J. Galloway, \emph{Rigidity of outermost {MOTS}: the initial data
  version}, Gen. Relativity Gravitation \textbf{50} (2018), no.~3, Art. 32, 7.
  \MR{3768955}

\bibitem{GalJang}
Gregory~J. Galloway and Hyun~Chul Jang, \emph{Some scalar curvature warped
  product splitting theorems}, Proc. Amer. Math. Soc. \textbf{148} (2020),
  no.~6, 2617--2629. \MR{4080902}

\bibitem{Galloway-Murchadha:2008}
Gregory~J. Galloway and Niall \'{O}~Murchadha, \emph{Some remarks on the size
  of bodies and black holes}, Classical Quantum Gravity \textbf{25} (2008),
  no.~10, 105009, 9. \MR{2416045}

\bibitem{GS}
Gregory~J. Galloway and Richard Schoen, \emph{A generalization of {H}awking's
  black hole topology theorem to higher dimensions}, Comm. Math. Phys.
  \textbf{266} (2006), no.~2, 571--576. \MR{2238889}

\bibitem{Huang-Lee:2020}
Lan-Hsuan Huang and Dan~A. Lee, \emph{Equality in the spacetime positive mass
  theorem}, Comm. Math. Phys. \textbf{376} (2020), no.~3, 2379--2407.
  \MR{4104553}

\bibitem{Lee}
Dan~A. Lee, \emph{Geometric relativity}, Graduate Studies in Mathematics, vol.
  201, American Mathematical Society, Providence, RI, 2019. \MR{3970261}

\bibitem{Lohkamp2016}
Joachim Lohkamp, \emph{The {H}igher {D}imensional {P}ositive {M}ass {T}heorem
  {II}}, preprint, \url{https://arxiv.org/abs/1612.07505} (2016).

\bibitem{Mendes}
Abra\~{a}o Mendes, \emph{Rigidity of marginally outer trapped (hyper)surfaces
  with negative {$\sigma$}-constant}, Trans. Amer. Math. Soc. \textbf{372}
  (2019), no.~8, 5851--5868. \MR{4014296}

\bibitem{Petersen}
Peter Petersen, \emph{Riemannian geometry}, second ed., Graduate Texts in
  Mathematics, vol. 171, Springer, New York, 2006. \MR{2243772}

\bibitem{SY2017}
Richard Schoen and Shing-Tung Yau, \emph{Positive {S}calar {C}urvature and
  {M}inimal {H}ypersurface {S}ingularities}, preprint,
  \url{https://arxiv.org/abs/1704.05490} (2017).

\end{thebibliography}

\end{document}